\documentclass[journal,onecolumn,draftcls]{IEEEtran}
\newif\iffull
 \fulltrue

\usepackage[english]{babel}
\usepackage{amssymb}
\usepackage{amsmath}
\usepackage[dvipsnames]{xcolor}
\usepackage{cite}
\usepackage{graphicx}
\usepackage{balance}
\usepackage{amsthm}
\usepackage{comment}
\usepackage[hidelinks]{hyperref}
\usepackage{mathrsfs}
\usepackage{tikz}
\usetikzlibrary{automata,positioning}

\theoremstyle{definition}
\newtheorem{theorem}{Theorem}
\newtheorem{lemma}{Lemma}
\newtheorem{remark}{Remark}
\newtheorem{construction}{Construction}
\newtheorem{corollary}{Corollary}
\newtheorem{definition}{Definition}

\newtheorem{example}{Example}
\newtheorem{observation}{Observation}
\newtheorem{problem}{Problem}
\newtheorem{claim}{Claim}

\addto\extrasenglish{

}

\newcommand{\N}{\mathbb{N}}

\newcommand{\cC}{{\cal C}}

\newcommand{\cE}{{\cal E}}

\newcommand{\cN}{{\cal N}}
\newcommand{\cO}{{\cal O}}
\newcommand{\cP}{{\cal P}}

\newcommand{\cR}{{\cal R}}

\newcommand{\cV}{{\cal V}}

\newcommand{\bfw}{{\boldsymbol w}}

\newcommand{\mA}{{\mathsf A}}
\newcommand{\mC}{{\mathsf C}}
\newcommand{\mG}{{\mathsf G}}
\newcommand{\mT}{{\mathsf T}}
\newcommand{\mN}{{\mathsf N}}

\newcommand{\DNA}{\cE}

\def\CC#1#2{\ensuremath{\left(\kern-.3em\left(\genfrac{}{}{0pt}{}{#1}{#2}\right)\kern-.3em\right)}}

\newcommand{\set}[2]{\left\{#1\;\left|\; #2\right.\right\}}
\newcommand{\abs}[1]{\left|#1\right|}

\newcommand{\te}[1]{{\footnotesize  [{\textcolor{blue}{#1}} \textcolor{blue!60!black}{--Tuvi}]\normalsize}}
\newcommand{\db}[1]{{\footnotesize  [{\textcolor{green!50!black}{#1}} \textcolor{purple!60!black}{--Daniella}]\normalsize}}

\newcommand{\full}[1]{\iffull #1 \fi}

\title{Representing Information on DNA using Patterns Induced by Enzymatic Labeling 
}
\author{\IEEEauthorblockN{Daniella Bar-Lev\IEEEauthorrefmark{1}\IEEEauthorrefmark{2},  Eitan Yaakobi\IEEEauthorrefmark{2}, Zohar Yakhini\IEEEauthorrefmark{2}\IEEEauthorrefmark{3}, and Tuvi Etzion\IEEEauthorrefmark{2}}\\
\thanks{The work of D. Bar-Lev, E. Yaakobi, and Z. Yakhini was funded by the European Union (DiDAX, 101115134). Views and opinions expressed are however those of the author(s) only and do not necessarily reflect those of the European Union or the European Research Council Executive Agency. Neither the European Union nor the granting authority can be held responsible for them. The work of D. Bar-Lev was also supported in part by Schmidt Science and by NSF Grant CCF-2212437. 
The research of T. Etzion and E. Yaakobi was also partially funded by the Israeli Science Foundation 2462/24.}
\thanks{Part of this work was presented and published at the IEEE International Symposium on Information Theory (ISIT), Athens, Greece, 2024~\cite{barlev24isit}.}
\IEEEauthorblockA{\IEEEauthorrefmark{1}\textit{Department of Mathematics, Universität Zürich, Switzerland}}\\
\IEEEauthorblockA{\IEEEauthorrefmark{2}\textit{Faculty of Computer Science, Technion -- Israel Institute of Technology, Haifa, Israel}} \\
\IEEEauthorblockA{\IEEEauthorrefmark{3}\textit{School of Computer Science, Reichman University, Herzliya, Israel}\\
Email: \{daniellalev, yaakobi, zohary, etzion\}@cs.technion.ac.il}}

\IEEEoverridecommandlockouts
\IEEEaftertitletext{\vspace{-5pt}}

\begin{document}

\maketitle
\vspace{-2ex}
\begin{abstract}
Enzymatic DNA labeling is a powerful tool 
with applications in biochemistry, molecular biology, biotechnology, medical science, and genomic research. This paper contributes to the evolving field of DNA-based data storage by presenting a formal framework for modeling DNA labeling in strings, specifically tailored for data storage purposes. Our approach involves a known DNA molecule as a template for labeling, employing patterns induced by a set of designed labels to represent information. 
One hypothetical implementation can use CRISPR-Cas9 and gRNA reagents for labeling.
Various aspects of the general labeling channel, including fixed-length labels, are explored, and upper bounds on the maximum size of the corresponding codes are given. The study includes the development of an efficient encoder-decoder pair that is proven to be optimal in terms of maximum code size under specific conditions.

The codewords for this framework are codesets, i.e., each codeword is a collection of a few labels. Of special interest are codes whose codesets contain fixed-length labels. Their construction and analysis are tied to de Bruijn sequences and constrained coding.
\end{abstract}

\vspace{-1.9ex}
\section{Introduction}
\label{sec:introduction}
Enzymatic DNA labeling is a powerful tool 
with applications in biochemistry, molecular biology, biotechnology, medical science, and genomic research~\cite{
moter2000fluorescence, chen2018efficient, gruszka2021single,
ma2015multicolor,
deen2017methyltransferase,
young2020technical, assa2024quantifying}. The technique involves the deliberate labeling of specific parts of the DNA molecule with specific markers. In recent literature, it was also demonstrated that DNA labeling (and more generally, DNA editing) can be used in the emerging technology of data storage on DNA, e.g., \cite{tabatabaei2020dna, Sadremomtaz2023storageCRISPR}. One important approach to labeling is the use of CRISPR-Cas9 based systems \cite{jinek2012programmable, chen2018efficient, cong2013multiplex, amit2021crispector}. 

CRISPR-Cas9 allows for precise targeting and modification of DNA sequences. The system consists of two key components: a guide RNA (gRNA) that directs the Cas9 enzyme to a specific DNA sequence, and the Cas9 enzyme itself, which introduces a cut or modification at the targeted site. High precision makes CRISPR-Cas9 an ideal candidate for DNA labeling, as it enables the attachment of markers or tags to specific locations on the DNA strand. However, the physical constraints of the enzyme and the labeling process impose limitations, such as the inability to place labels in overlapping positions. These constraints are critical to consider when designing labeling systems for practical applications, including data storage.

Beyond its practical applications, recent studies shifted focus towards understanding DNA labeling from an information theoretic perspective. In~\cite{hanania2023capacity}, Hanania et al. modeled the process of DNA labeling as a communication channel. In this setup, either a fixed single label or a small set of fixed labels is being used. The input is a 4-ary sequence representing the DNA molecule and the output is a sequence in which \emph{non-zero} entries represent the presence of labels in the molecule. The work presents multiple results regarding the capacity of this channel, as a function of labels used. 
Further work studied the minimum number of labels required to achieve full labeling capacity through extremal de Bruijn subgraphs~\cite{hofmeister2026achieving}. The single-label capacity was characterized in~\cite{wu2026capacity}, while composite labels were considered in~\cite{hanania2026composite}. Error-correcting codes for labeled DNA sequences were developed for single substitution, insertion, and deletion errors~\cite{hanania2025error}, and for burst insertions and deletions~\cite{liu2026burst}.
Another related communication channel, which was studied by Nogin et al.~\cite{Nogin2023OM}, considered the process of Optical Genome Mapping (OGM)~\cite{levy2013beyond, muller2017optical}, a useful application of DNA labeling that involves optically imaging DNA fragments containing labeled short sequence patterns. The study utilizes techniques from information theory to enable the design of optimal labeling patterns for this application. In \cite{tabatabaei2020dna}, the authors introduced and investigated concepts related to using detectable chemical modifications to implement information storage leveraging native DNA. More recently, Zhang et al.~\cite{zhang2024parallel} demonstrated parallel data storage by writing epigenetic bits onto universal DNA templates using enzymatic methylation.

In enzymatic DNA labeling, the process typically involves the following steps: (1) identification of target sites on the DNA molecule, (2) binding of the labeling enzyme (e.g., CRISPR-Cas9) to these sites, and (3) attachment of the marker or tag to the DNA. The efficiency and accuracy of this process depend on factors such as the spacing between target sites and the physical properties of the enzyme. For instance, if two target sites overlap, the enzyme may not be able to bind to both sites simultaneously, leading to incomplete labeling. These practical considerations motivate the development of theoretical models that account for non-overlapping constraints, ensuring that the labeling process is both feasible and optimal.

 In this work, we define a formal framework for modeling labeling in DNA strings and investigate optimal design in such systems for the purpose of data storage in DNA. 
 In contrast to the model suggested in~\cite{hanania2023capacity}, in our model a specific known DNA molecule is considered as template for labeling, while the information is represented by patterns induced by a set of labels designed for this purpose.
Operationally, the encoder and decoder first agree on a reference sequence $S$ and a labeling code $\cC=\{\Lambda_1,\ldots,\Lambda_M\}$. Each possible message is assigned to one codeset in $\cC$. To store message $i$, the labels contained in $\Lambda_i$ are applied to the reference molecule. Every occurrence in $S$ of a selected label is marked, and the union of all marked positions forms the binary pattern $\Lambda_i(S)$. The readout records this pattern, with a \emph{one} at every labeled position and a \emph{zero} at every unlabeled position. Since both $S$ and $\cC$ are known to the decoder, it identifies the codeset that produced the observed pattern and recovers the corresponding message. Thus, the information is represented by the choice of labels applied to the known reference sequence, rather than by changing the nucleotide sequence itself. For example, if $\cC$ contains four codesets, they may
be assigned to the messages $00$, $01$, $10$, and $11$.

To address the practical limitations of enzymatic labeling, we also introduce two types of non-overlapping constraints: (1) non-overlapping, where no two label sites overlap, and (2) partially-non-overlapping, where for each label, there is at least one occurrence in which it does not overlap with any other label. The motivation for the second constraint is that having at least one isolated site for each label is sufficient to determine the label that was used, as the enzyme will bind to this site even if other occurrences are blocked. 

 We characterize the performance of this labeling channel addressing several possible variants. In particular, we consider labels of fixed length $\ell$, which is the expected case if CRISPR-Cas9 were to be used, as well as systems that can support a dynamic range of lengths. We developed, for example, an efficient encoder and decoder pair for this channel in the special case in which $\ell$ is fixed and the template sequence is $\ell$-\emph{repeat-free}. In this case, we proved optimality of our construction. 

The rest of the paper is organized as follows. In \autoref{sec:background}, we introduce the definitions that will be utilized throughout the paper, including the formal definitions of our labeling model and the problems investigated.  The codewords of our codes are codesets, i.e., each codeword is a collection of a several labels. \autoref{sec: problem 1} provides several bounds that consider strings with a small period as templates. The size of the codes associated with such periodic reference sequences is analyzed.
Moving on to \autoref{sec: fix length}, our focus shifts to labels with a fixed length, denoted as $\ell$. This section presents both upper and lower bounds on the maximum size of codes in our model. Furthermore, we introduce an efficient algorithmic construction that achieves the upper bound. Our codes, when the labels are of fixed length, are strongly associated with de Bruijn sequences and constrained codes.
Finally, \autoref{sec:conclusion} concludes the paper,  briefly summarizing some of our key findings and suggesting potential future directions.

\vspace{-0.5ex}
\section{Problem Statement}
\label{sec:background}
\vspace{-0.5ex}
Let $\Sigma_q$ denote an alphabet of size $q$.  We use $\Sigma_q^n$ to denote the set of all sequences of length $n$ over $\Sigma_q$,  and $\Sigma_q^*$ for the set of all sequences of any finite length over $\Sigma_q$. Motivated by DNA storage, in the case $q=4$ we also define $\DNA = \{\mathsf{A,C,G,T}\}$ as the DNA alphabet\footnote{Since no group structure of $\Sigma_q$ is used, $\DNA$ and $\Sigma_4$ are equivalent; this distinction is made solely for the convenience of notation.}. Accordingly, we write $\DNA^n$ and $\DNA^*$ for the sets of length-$n$ and arbitrary-length sequences over $\DNA$. Denote by $[n]$ the set $\{1, 2,\ldots, n\}$. For a symbol $\sigma \in \Sigma_q$ and an integer $\ell \geq 1$, we use $\sigma^\ell$ to denote the sequence consisting of $\ell$ consecutive copies of $\sigma$. For a sequence $S=(s_1,\ldots,s_n)\in\Sigma_q^n$, and $1\leq i \leq n-\ell+1$, let $S_{[i;\ell]}\triangleq(s_i,\ldots, s_{i+\ell-1})$ and let $$W_\ell(S)\triangleq \{S_{[i;\ell]}: 1\le i\le n-\ell+1\},$$
is the set of all length $\ell$ substrings of $S$. A \emph{label} $\lambda\in\Sigma^*_q$ is a relatively short sequence over $\Sigma_q$. For a sequence $S\in\Sigma^n_q$, the \emph{labeling} of $S$ with the label $\lambda$ is defined as follows.

\begin{definition}
Given a label $\lambda\in\Sigma^\ell_q$, the \emph{$\lambda$-labeling} (function), $\lambda:\Sigma_q^*\to\{0,1\}^*$, is the mapping (associated with the label $\lambda$) that takes a sequence $S\in\Sigma_q^n$ to the binary sequence $\lambda(S)\in \{0,1\}^n$ given by 
\[
\lambda(S)_j = 1 \iff  S_{[i;\ell]}=\lambda \text{ for some } j-\ell+1\le i\le j.
\]
That is, $\lambda(S)_{[i;\ell]}=(1, 1, \ldots, 1)$ whenever $S_{[i;\ell]}=\lambda$, and all the remaining entries of $\lambda(S)$ are \emph{zeros}.  \hfill\quad $\blacksquare$
\end{definition}

\begin{example}
   Let $\lambda={\textcolor{blue}{\mA\mA\mA\mC}}\in\DNA^4$, and consider the sequences $S_1,S_2\in\DNA^{20}$ given below.  The $\lambda$-labeling of $S_1$ and $S_2$ is as follows. \vspace{-2ex}
\[\arraycolsep=1.4pt\def\arraystretch{2.2}
\begin{array}{cccccccccccccccccccccc}
        S_1 & = &\mA&\mA& \textcolor{blue}{\mA}&\textcolor{blue}{\mA}&\textcolor{blue}{\mA}&\textcolor{blue}{\mC}& \mT&\mG&\mT&\mG&\mC&\mA&\mT&\mA& \textcolor{blue}{\mA}&\textcolor{blue}{\mA}&\textcolor{blue}{\mA}&\textcolor{blue}{\mC}& \mC& \mG \vspace{-3ex}\\
        \lambda(S_1) & = &0&0& 1&1&1&1& 0&0&0&0&0&0&0&0& 1&1&1&1& 0&0
\end{array}
\]
\vspace{-4ex}
\[\arraycolsep=1.4pt\def\arraystretch{2.2}
\begin{array}{{cccccccccccccccccccccc}}
        S_2 & = &\textcolor{blue}{\mA}&\textcolor{blue}{\mA}&\textcolor{blue}{\mA}&\textcolor{blue}{\mC}& \textcolor{blue}{\mA}&\textcolor{blue}{\mA}&\textcolor{blue}{\mA}&\textcolor{blue}{\mC}& \mT&\mC&\mC&\mG&\mG&\mA&\mT&\mG&\mT&\mG&\mG&\mA \vspace{-3ex}\\
        \lambda(S_2) &= & 1&1&1&1&1&1&1&1&  0&0&0&0&0&0&0&0&0&0&0&0
\end{array}
\]
\hfill\quad $\blacksquare$
\end{example}
\vspace{-1ex}
We will be interested in labeling using multiple labels. Let $\Lambda=\{\lambda_1,\lambda_2,\ldots,\lambda_t\}$ be a set of $t$ labels $\lambda_i\in\Sigma^*_q$. The $\Lambda$-labeling is associated with the labelings of all the $t$ labels in $\Lambda$ together and is defined as follows. 
\vspace{-0.5ex}
\begin{definition}
    Let $\Lambda = \{\lambda_1,\ldots,\lambda_t\}$ be a set of $t$ labels, where ${\lambda_i\in\Sigma^{\ell_i}_q}$.  The \emph{$\Lambda$-labeling}, ${\Lambda:\Sigma^*_q\to\{0,1\}^*}$ 
    of a sequence $S\in\Sigma^n_q$ is the bitwise OR, denoted by $\lor$, of the $t$~binary sequences ${\lambda_i}(S)$, $i\in[t]$, i.e.,
    \[
        \Lambda(S) \triangleq {\lambda_1}(S)\lor {\lambda_2}(S) \lor \cdots \lor {\lambda_t}(S).
    \]

In other words, for a set of $t$ labels $\Lambda=\{\lambda_1,\ldots,\lambda_t\}$, the $\Lambda$-labeling of a sequence $S\in\Sigma^n_q$ is the binary sequence $\Lambda(S)\in \{0,1\}^n$ in which if $S_{[i;\ell_j]}=\lambda_j\in \Lambda$ for some $j\in[t]$ then 
$\Lambda(S)_{[i;\ell_j]}=(1\ 1\ \cdots\ 1)$ and the rest are zeros.
    \hfill\quad $\blacksquare$
\end{definition}

\vspace{-0.5ex}
\begin{example}\label{ex: Lambda lebaling}
    Let $\Lambda=\{\textcolor{blue}{\mA\mA\mA\mC}, \textcolor{red}{\mC\mC}, \textcolor{teal}{\mG\mT\mG}\}$, and consider the sequences $S_1,S_2\in\DNA^{20}$ given below. 
    The $\Lambda$-labeling of $S_1$ and $S_2$ is as follow. 
    \vspace{-2ex}
\[\arraycolsep=1.4pt\def\arraystretch{2.2}
\begin{array}{cccccccccccccccccccccc}
        S_1 & = &\mA&\mA& \textcolor{blue}{\mA}&\textcolor{blue}{\mA}&\textcolor{blue}{\mA}&\textcolor{blue}{\mC}& \mT&\textcolor{teal}{\mG}&\textcolor{teal}{\mT}&\textcolor{teal}{\mG}&\mC&\mA&\mT&\mA& \textcolor{blue}{\mA}&\textcolor{blue}{\mA}&\textcolor{blue}{\mA}&\textcolor{Mulberry}{\mC}& \textcolor{red}{\mC}& \mG \vspace{-3ex}\\
        \Lambda(S_1) & = &0&0& 1&1&1&1& 0& 1&1&1& 0&0&0&0& 1&1&1&1& 1&0
\end{array}
\]
\vspace{-4ex}
\[\arraycolsep=1.4pt\def\arraystretch{2.2}
    \begin{array}{{cccccccccccccccccccccc}}
        S_2 & = &\textcolor{blue}{\mA}&\textcolor{blue}{\mA}&\textcolor{blue}{\mA}&\textcolor{blue}{\mC}& \textcolor{blue}{\mA}&\textcolor{blue}{\mA}&\textcolor{blue}{\mA}&\textcolor{blue}{\mC}& \mT&\textcolor{red}{\mC}&\textcolor{red}{\mC}&\mG&\mG&\mA&\mT&\textcolor{teal}{\mG}&\textcolor{teal}{\mT}&\textcolor{teal}{\mG}&\mG&\mA \vspace{-3ex}\\
        \Lambda(S_2) &= & 1&1&1&1&1&1&1&1&  0& 1&1& 0&0&0&0& 1&1&1& 0&0
    \end{array}
\]
    \hfill\quad $\blacksquare$
\end{example}

We note that the labeling of a sequence $S\in\Sigma_q^n$ using a label $\lambda\in\Sigma_q^*$ can be observed only if $\lambda$ is a substring of $S$, i.e., $\lambda\in W_\ell(S)$ for some integer $\ell\in[n]$. This leads us to the following definition.

\begin{definition}\label{def: redundant label}
   For $S\in\Sigma_q^n$, a label $\lambda\in\Sigma_q^\ell$ is called $S$-redundant if $\lambda\notin W_\ell(S)$. 
       \hfill\quad $\blacksquare$
\end{definition}

\begin{definition}\label{S-uniquely decodable}
    A \emph{labeling code} $\cC$ is a collection of sets of labels $\{\Lambda_i\}_{i=1}^{M}$ of sizes $\{t_i\}_{i=1}^{M}$ wherein each  $\Lambda_i\in\cC$ is called a \emph{codeset}. Given a sequence $S\in\Sigma^*_q$, we say that a labeling code $\cC$ is \emph{$S$-uniquely-decodable} if for any two distinct codesets $\Lambda_1,\Lambda_2\in \cC$, we have that $\Lambda_1(S)\ne\Lambda_2(S)$. 

    This is equivalent to saying that $\cC$ is $S$-uniquely-decodable if any codeset $\Lambda\in\cC$ can be uniquely recovered given $S$ and the $\Lambda$-labeling of $S$, $\Lambda(S)$.
        \hfill\quad $\blacksquare$
\end{definition}

Operationally, the encoder assigns each possible message to one codeset $\Lambda\in\cC$. To store that message, the labels in $\Lambda$ are applied to the reference sequence $S$, producing the binary pattern $\Lambda(S)$. The decoder knows both $S$ and $\cC$ and
recovers the message by identifying the codeset that produced the observed pattern. The requirement that $\cC$ is $S$-uniquely-decodable guarantees that this codeset is unique.

Therefore, if $|\cC|=M$, then the code can represent $M$ distinct messages, corresponding to $\log_2 M$ bits of information. More precisely, it can represent all binary messages of length $\lfloor\log_2 M\rfloor$ by using any $2^{\lfloor\log_2 M\rfloor}$ of its codesets. 

\begin{definition}
    Let $\Lambda_1,\Lambda_2$ be two sets  and let $S\in\Sigma^n_q$. We say that $\Lambda_1,\Lambda_2$ are \emph{$S$-equivalent} and denote $\Lambda_1\equiv_S\Lambda_2$ if the labeling of $S$ with $\Lambda_1$ is identical to the labeling of $S$ with $\Lambda_2$, i.e.,  $\Lambda_1(S)=\Lambda_2(S)$.
        \hfill\quad $\blacksquare$
\end{definition}

\begin{example} 
    Consider the code $\cC = \{\Lambda_{1}, \Lambda_{2},\Lambda_{3}\}$, where
    $$
    \Lambda_{1} =  \{\mA\mC\},\ \ \  \Lambda_{2}=\{\mA\mC,\mG\mT\},\ \ \  \Lambda_{3}=\{\mA, \mC\}.
    $$
    It can be verified that $\cC$ is $S$-uniquely-decodable for 
    $S = \mA\mC\mG\mT\mA\mA\mA\mT$, by observing that ${{\Lambda_{1}}(S) = 11000000}$, ${{\Lambda_{2}}(S) = 11110000}$, and ${{\Lambda_{3}}(S) = 11001110}$,
    are all different. On the other hand, $\cC$ is not $S'$-uniquely-decodable for the sequence
    $S' = \mA\mC\mG\mT\mA\mC\mG\mT$, since 
    ${\Lambda_{1}}(S') = 11001100 = {\Lambda_{3}}(S')
    $, i.e., $\Lambda_1$ and $\Lambda_3$ are $S'$-equivalent ($\Lambda_1\equiv_{S'}\Lambda_3$).
        \hfill\quad $\blacksquare$
\end{example}

As was explained in the Introduction, using common enzymatic labeling techniques, overlapping labels are not physically feasible due to the limitations of the enzymes or labeling techniques. Typically, at most one label will successfully bind, leaving the other undetectable. To address this, we introduce two more models of labeling codes:
\begin{itemize}
    \item \textbf{$S$-non-overlapping codes}
    \item  \textbf{$S$-partially-non-overlapping codes.}
\end{itemize} The former ensures that no two labels in a codeset overlap within the sequence $S$, preventing ambiguity entirely. The latter allows overlaps but guarantees that for each codeset, each label appears at least once in an isolated, non-overlapping position, ensuring that all labels can still be reliably identified.

To formally define these two models, we first name the individual sites at which a label binds. For a codeset $\Lambda$ and a sequence ${S\in\Sigma^n_q}$, let 
\[
\cO(\Lambda, S)\triangleq \set{(\lambda, i)}{\lambda\in\Lambda,\ i\in [n-|\lambda|+1],\ S_{[i;|\lambda|]}=\lambda}
\]
be the set of \emph{occurrences} of the labels of $\Lambda$ in $S$, and for an occurrence $o=(\lambda,i)\in\cO(\Lambda,S)$ let 
\[
\mathsf{supp}(o)\triangleq\{i,i+1,\ldots,i+|\lambda|-1\}\]
be the set of positions it labels. Note that $\Lambda(S)_j=1$ if and only if $j\in\mathsf{supp}(o)$ for some $o\in\cO(\Lambda,S)$.

\begin{definition}\label{def: non-overlapping}
Let $S\in\Sigma^n_q$ be a sequence. A codeset $\Lambda$ is called $S$-non-overlapping if the supports of its occurrences are pairwise disjoint, that is, $\mathsf{supp}(o)\cap\mathsf{supp}(o')=\varnothing$ for any two distinct occurrences $o,o'\in\cO(\Lambda,S)$.
A labeling code $\cC$ is called $S$-non-overlapping if every codeset $\Lambda\in\cC$ is $S$-non-overlapping.
    \hfill\quad $\blacksquare$
\end{definition}

\begin{example}\label{ex: non-overlapping}
    Let $\Lambda=\{\textcolor{blue}{\mA\mA\mA\mC}, \textcolor{red}{\mC\mC}, \textcolor{teal}{\mG\mT\mG}\}$, and consider the sequences $S_1,S_2,S_3\in\DNA^{20}$ given below. 
    \vspace{-2ex}
\[\arraycolsep=1.4pt\def\arraystretch{2.2}
\begin{array}{cccccccccccccccccccccc}
        S_1 & = &\mA&\mA& \textcolor{blue}{\mA}&\textcolor{blue}{\mA}&\textcolor{blue}{\mA}&\textcolor{blue}{\mC}& \mT&\textcolor{teal}{\mG}&\textcolor{teal}{\mT}&\textcolor{teal}{\mG}&\mC&\mA&\mT&\mA& \textcolor{blue}{\mA}&\textcolor{blue}{\mA}&\textcolor{blue}{\mA}&\textcolor{Mulberry}{\mC}& \textcolor{red}{\mC}& \mG \vspace{-3ex}\\
        \Lambda(S_1) & = &0&0& 1&1&1&1& 0& 1&1&1& 0&0&0&0& 1&1&1&1& 1&0
\end{array}
\]\vspace{-4ex}
\[\arraycolsep=1.4pt\def\arraystretch{2.2}
\begin{array}{cccccccccccccccccccccc}
        S_2 & = &\mA&\mA& \textcolor{blue}{\mA}&\textcolor{blue}{\mA}&\textcolor{blue}{\mA}&\textcolor{blue}{\mC}& \mT&\textcolor{teal}{\mG}&\textcolor{teal}{\mT}&\textcolor{teal}{\mG}& \textcolor{red}{\mC}&\textcolor{red}{\mC}&\mT&\mA& \textcolor{blue}{\mA}&\textcolor{blue}{\mA}&\textcolor{blue}{\mA}&\textcolor{Mulberry}{\mC}& \textcolor{red}{\mC}& \mG \vspace{-3ex}\\
        \Lambda(S_2) & = &0&0& 1&1&1&1& 0& 1&1&1& 1&1&0&0& 1&1&1&1& 1&0
\end{array}
\]\vspace{-4ex}
\[\arraycolsep=1.4pt\def\arraystretch{2.2}
    \begin{array}{{cccccccccccccccccccccc}}
        S_3 & = &\textcolor{blue}{\mA}&\textcolor{blue}{\mA}&\textcolor{blue}{\mA}&\textcolor{blue}{\mC}& \textcolor{blue}{\mA}&\textcolor{blue}{\mA}&\textcolor{blue}{\mA}&\textcolor{blue}{\mC}& \mT&\textcolor{red}{\mC}&\textcolor{red}{\mC}&\mG&\mG&\mA&\mT&\textcolor{teal}{\mG}&\textcolor{teal}{\mT}&\textcolor{teal}{\mG}&\mG&\mA \vspace{-3ex}\\
        \Lambda(S_3) &= & 1&1&1&1&1&1&1&1&  0& 1&1& 0&0&0&0& 1&1&1& 0&0
    \end{array}
\]
We have that, $\Lambda$ is neither $S_1$-non-overlapping nor $S_2$-non-overlapping as the label $\textcolor{red}{\mC\mC}$ overlaps with the label $\textcolor{blue}{\mA\mA\mA\mC}$ (at position $18$), which violates the condition for a non-overlapping codeset. 
As there are no overlaps of labels from $\Lambda$ in $S_3$, $\Lambda$ is an $S_3$-non-overlapping codeset.~\hfill\quad $\blacksquare$
\end{example}

\begin{definition}\label{def: partially-non-overlapping}
Let $S\in\Sigma^n_q$ be a sequence. A codeset $\Lambda$ is called $S$-partially-non-overlapping if every label $\lambda\in\Lambda$ has at least one \emph{isolated} occurrence in $S$. That is, for every $\lambda\in\Lambda$ there exists an occurrence $o=(\lambda,i)\in\cO(\Lambda,S)$ such that $\mathsf{supp}(o)\cap\mathsf{supp}(o')=\varnothing$ for every $o'\in\cO(\Lambda,S)\setminus\{o\}$. A labeling code $\cC$ is called $S$-partially-non-overlapping if every codeset $\Lambda\in\cC$ is $S$-partially-non-overlapping.
    \hfill\quad $\blacksquare$
\end{definition}

\begin{remark}
    Both definitions quantify over occurrences rather than over labels, so that a label is also required not to overlap a \emph{different occurrence of itself}. For instance, for $\Lambda=\{\mA\mA\mA\}$ and $S=\mA\mA\mA\mA\mA$ we have $\cO(\Lambda,S)=\{(\mA\mA\mA,1),(\mA\mA\mA,2),(\mA\mA\mA,3)\}$, whose supports are not pairwise disjoint, and indeed $\Lambda(S)=11111$ contains a run of \emph{ones} whose length is not a multiple of $|\lambda|$. 
        \hfill\quad $\blacksquare$
\end{remark}
\begin{example}
    Consider the codeset $\Lambda$ and the sequences $S_1,S_2,S_3$ from~\autoref{ex: non-overlapping}. As the label $\textcolor{red}{\mC\mC}$ appears only once in $S_1$, and its first symbol overlaps with an occurrence of the label $\textcolor{blue}{\mA\mA\mA\mC}$, $\Lambda$ is not $S_1$-partially-non-overlapping codeset.  By contrast, each of the three labels in $\Lambda$ has at least one isolated occurrence in $S_2$ and in $S_3$ which makes $\Lambda$ both, $S_2$-partially-non-overlapping codeset and $S_3$-partially-non-overlapping codeset.     \hfill\quad $\blacksquare$
\end{example}

Being $S$-non-overlapping or $S$-partially-non-overlapping is not necessary for a code to be $S$-uniquely-decodable: the code of \autoref{const: meet uppper bound} is $S$-uniquely-decodable, yet its codesets contain overlapping labels whenever the length of a run of \emph{ones} is not divisible by $\ell$. It is, however, sufficient in some cases, as shown in \autoref{subsec: overlap models}.

A general characterization of unique decodability beyond these specific cases remains open and motivates one of the problems stated below.

In order to represent information using DNA labeling, we need a large codebook $\cC$ which is $S$-uniquely-decodable, for some \emph{reference sequence} $S\in\DNA^n$. To this end, we need to consider both the design of such codes as well as the selection of the reference sequence~$S$. These  objectives are formally stated next, where we present three combinatorial problems related to DNA labeling and its use for information representation and data storage purposes. 

\begin{problem}[\textbf{Maximum size code for a given sequence}]\label{problem: code given S}
    Given a reference sequence $S\in\Sigma^n_q$, find the maximum size of a labeling code $\cC$ which is $S\text{-uniquely-decodable}$ and such a code $\cC$. 
    That is, we want to find the value 
        \[
        M(S)\triangleq \max\{\ |\cC|\ :\ \cC \text{ is $S$-uniquely-decodable}\},
        \]
        and an $S\text{-uniquely-decodable}$ code $\cC$ such that $|\cC| = M(S)$. 

We also consider the optimization over the reference sequence $S$ and define
\[
    M(n,q)\triangleq\max_{S\in\Sigma_q^n}M(S).
\]

     \hfill\quad $\blacksquare$
     \end{problem}

Thus, $M(S)$ is the maximum number of messages that can be represented using the reference sequence $S$, and $\log_2 M(S)$ is the corresponding maximum amount of information in bits. Similarly, $M(n,q)$ is the maximum number of messages when the reference sequence
may be selected from $\Sigma_q^n$, and the corresponding amount of information is $\log_2 M(n,q)$ bits. The information content per nucleotide/position is
\[
    \frac{\log_2 M(S)}{n}
\]
bits per reference-sequence symbol.

\full{
\begin{problem}[\textbf{Shortest reference sequence for a given code or code size}]
Given a labeling code $\cC$, find the following.
\begin{enumerate}
    \item Characterize the reference sequences    $S\in\Sigma_q^*$ for which $\cC$ is    $S$-uniquely-decodable. In particular, determine necessary and sufficient conditions for such a reference sequence to exist.

    \item If such a reference sequence exists, determine the minimum length $n$ for which $\cC$ is $S$-uniquely-decodable for some $S\in\Sigma_q^n$.

    \item Construct a reference sequence $S$ of minimum length for which $\cC$ is $S$-uniquely-decodable.
\end{enumerate}

We also consider the case in which the code itself is not prescribed and only its desired size is given. For a positive integer $M$, define
\[
N(M,q)
\triangleq
\min\left\{
n\in\mathbb N:
M(n,q)\ge M
\right\}.
\]
Determine $N(M,q)$ and construct a reference sequence $S\in\Sigma_q^{N(M,q)}$ together with an $S$-uniquely-decodable labeling code $\cC$ of size $M$.
\hfill\quad $\blacksquare$
\end{problem}
}

In the next two examples, we show that given a non-trivial labeling code $\cC$, it might be that $\cC$ is not $S$-uniquely-decodable for all $S\in\Sigma^*_q$. Otherwise, there exists a minimal length~$n$ such that $\cC$ is $S$-uniquely-decodable for some $S\in\Sigma^n_q$ but is not $S$-uniquely-decodable for any $S\in\Sigma^{n'}_q$ for $n'<n$. 

\begin{example}\label{ex: prob 2 |S|=4}
    Let $\cC = \{\Lambda_1,\Lambda_2,\Lambda_3\}$ such that ${\Lambda_1=\{\mA\}}$, $\Lambda_2=\{\mA\mA\}$, and  $\Lambda_3=\{\mA\mA\mC\}.$
    We first show that $\cC$ is not $S$-uniquely-decodable for any sequence $S$ of length at most three. 
    \begin{itemize}
        \item  If $|S|=1$, then the labels $\mA\mA$ and $\mA\mA\mC$ are $S$-redundant, and hence $\Lambda_2(S)=\Lambda_3(S)=0$.
        \item If $|S|=2$, then $\mA\mA\mC$ is $S$-redundant. If $S=\mA\mA$, then    $\Lambda_1(S)=\Lambda_2(S)=11$; otherwise, $\mA\mA$ is also $S$-redundant and $\Lambda_2(S)=\Lambda_3(S)=00$.
        \item Finally, suppose that $|S|=3$. If $\mA\mA$ is not a substring of $S$, then neither is $\mA\mA\mC$, and consequently $\Lambda_2(S)=\Lambda_3(S)=000$. Otherwise, all occurrences of $\mA$ belong to the same run of $\mA$'s of length at least two, and therefore $\Lambda_1(S)=\Lambda_2(S)$.
    \end{itemize}

    On the other hand, for
    \[
        S=\mA\mA\mC\mA
    \]
    we have
    \[
        \Lambda_1(S)=1101,\qquad
        \Lambda_2(S)=1100,\qquad
        \Lambda_3(S)=1110.
    \]
    Hence, $\cC$ is $S$-uniquely-decodable, and each of its labels occurs in $S$. Thus, the minimal length is four. \hfill\quad $\blacksquare$
\end{example}

\begin{example}\label{ex: prob 2 |S|=inf}
     Let $\cC = \{\Lambda_1,\Lambda_2,\Lambda_3\}$ such that 
     \[
\begin{aligned}
    \Lambda_1
        &=\{\mA,\mC,\mG,\mT\},\\
    \Lambda_2
        &=\{\mA\mA,\mA\mC,\mA\mG,\mA\mT,
             \mC\mA,\mC\mC,\mC\mG,\mC\mT,\\
        &\hspace{1.7em}
             \mG\mA,\mG\mC,\mG\mG,\mG\mT,
             \mT\mA,\mT\mC,\mT\mG,\mT\mT\},\\
    \Lambda_3
        &=\{\mA\mA\mA\}.
\end{aligned}
\]
     If $|S|=1$, then
    $\Lambda_2(S)=\Lambda_3(S)=0$. If $|S|\geq2$, every
    position of $S$ is covered by both a label from $\Lambda_1$
    and a label from $\Lambda_2$, and hence
    \[
        \Lambda_1(S)=\Lambda_2(S)=1^{|S|}.
    \]
Therefore, $\cC$ is not $S$-uniquely-decodable for any
${S\in\{\mA,\mC,\mG,\mT\}^*}$.

\hfill\quad $\blacksquare$
\end{example}

In many labeling techniques, not all labels $\lambda \in \Sigma^*_q$ are valid, as certain constraints must be considered. For example, when using CRISPR-Cas9, valid labels must be exactly 23 nucleotides long, while for CRISPR-Cas12, they are slightly longer (27 or 28 nucleotides). Furthermore, not all sequences of length $\ell$ are permissible, as the gRNA recognizes a specific suffix known as the \textit{protospacer adjacent motif (PAM)}, required for binding. For instance, in the case of CRISPR-Cas9, a valid label must be 23 nucleotides long, with its final three bases forming part of a ${\mN\mG\mG}$ suffix (where $\mN$ can be any nucleotide).

This motivates a generalization of the previous problems to \autoref{problem: executable labels}, which introduces a variation of the model that accounts for compatible labels. For example, in the case of CRISPR-compatible labels~\cite{jinek2012programmable, chen2018efficient, cong2013multiplex, amit2021crispector} each $\lambda_i$ can be thought of as a potential edit site.

\begin{problem}[\textbf{A restricted set of labels}]\label{problem: executable labels}
    Given a collection of $\tau$ labels ${\cV = \{\lambda_1,\lambda_2,\ldots,\lambda_\tau\}}$, referred as the \emph{executable labels}, find the following.
    \begin{enumerate}
    \item \textbf{(Maximum size code for a given sequence).} Given a reference sequence $S\in\Sigma^n_q$, find the value 
    \[
     M(S,\cV)  \triangleq  \max\left\{ |\cC|\ \middle\vert \begin{array}{l}
    \text{$\cC$ is $S$-uniquely-decodable,} \\
    \ \ \ \ \ \ \ \ \ \ \cC\subseteq \cP(\cV)
  \end{array}\right\},
    \]
    where $\cP(\cV)$ is the power set of $\cV$ (i.e., for any $\Lambda\in\cC$ we have $\Lambda\subseteq \cV$), and an $S$-uniquely-decodable code $\cC$ such that $|\cC|=M(S,\cV)$. 
    \item \textbf{(Maximum size code and a corresponding sequence for a given length).} Given a positive integer $n$ find the value $$M(n, q, \cV)\triangleq\max_{S\in\Sigma^n_q}\left\{ M(S,\cV)\right\},$$ a reference sequence $S_{(n,\cV)}\in\Sigma^n_q$ such that 
    \[
    M(S_{(n,\cV)},\cV) = M(n, q, \cV),
    \]
    and an $S_{(n,\cV)}$-uniquely decodable code $\cC_{(n,\cV)}\subseteq\cP(\cV)$, such that $|\cC_{(n,\cV)}|= M(n, q, \cV)$. 
 \item \textbf{(Maximum size code).} Find the size of a maximum code $\cC\subseteq \cP(\cV)$ that is $S$-uniquely-decodable for some sequence $S\in\Sigma^*_q$, i.e., find the value 
 \[
 M(q, \cV)\triangleq \max_{n\in\N} \{M(n, q, \cV)\}.
 \]
 Additionally, find a shortest sequence $S\in\Sigma_q^*$ and a code ${\cC\subseteq \cP(\cV)}$ of size $M(q, \cV)$ such that $\cC$ is $S$-uniquely-decodable.
    \end{enumerate}
        \hfill\quad $\blacksquare$
\end{problem}

The three problems defined above will be discussed under the three different models we defined for $S$-uniquely-decodable codes:
\begin{itemize}
    \item Model 1: $S$-uniquely-decodable codes with no additional requirements.
    \item Model 2: Non-overlapping $S$-uniquely-decodable codes.
    \item Model 3: Partially non-overlapping $S$-uniquely-decodable codes.
\end{itemize}

For $\mathsf{X}\in\{\mathsf{NO},\mathsf{PNO}\}$, let
$M^{\mathsf{X}}(S,\cV)$ denote the maximum size of an
$S$-uniquely-decodable code $\cC\subseteq\cP(\cV)$ that is,
respectively, $S$-non-overlapping or
$S$-partially-non-overlapping. We similarly define
\[
M^{\mathsf{X}}(n,q,\cV)
    \triangleq
    \max_{S\in\Sigma_q^n}M^{\mathsf{X}}(S,\cV)
\]
and
\[
M^{\mathsf{X}}(q,\cV)
    \triangleq
    \max_{n\in\mathbb N}M^{\mathsf{X}}(n,q,\cV).
\]
When the set of executable labels is unrestricted, we omit $\cV$
from the notation.

\begin{observation}
    Let $\cC$ be a labeling code. If $\cC$ is a non-overlapping $S$-uniquely-decodable code then $\cC$ is also a partially non-overlapping $S$-uniquely-decodable code. If $\cC$ is a partially non-overlapping $S$-uniquely-decodable code then $\cC$ is an $S$-uniquely-decodable code (with no additional requirements). 
        \hfill\quad $\blacksquare$
\end{observation}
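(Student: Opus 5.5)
The plan is simply to unwind the definitions, since both implications are inclusions between the defining conditions. Being $S$-uniquely-decodable is a property of the code alone: distinct codesets must produce distinct labelings $\Lambda_1(S)\neq\Lambda_2(S)$. The two overlap models only add extra requirements on top of this. It therefore suffices to show, codeset by codeset, that $S$-non-overlapping implies $S$-partially-non-overlapping. The second implication is then immediate, because a partially non-overlapping $S$-uniquely-decodable code is by definition $S$-uniquely-decodable.

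For the first implication, fix $\Lambda\in\cC$ that is $S$-non-overlapping, in the sense of \autoref{def: non-overlapping}, and let $\lambda\in\Lambda$. To verify \autoref{def: partially-non-overlapping}, I need an occurrence $o=(\lambda,i)\in\cO(\Lambda,S)$ whose support is disjoint from every other occurrence. If $\lambda$ occurs in $S$, take any occurrence $o$ of it. Pairwise disjointness of all supports in $\cO(\Lambda,S)$ then gives $\mathsf{supp}(o)\cap\mathsf{supp}(o')=\varnothing$ for all $o'\neq o$, so $o$ is isolated.

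The only point needing care is the case where some $\lambda\in\Lambda$ is $S$-redundant, in the sense of \autoref{def: redundant label}. Then $\lambda$ has no occurrence at all, and the partially-non-overlapping condition fails vacuously while the non-overlapping one holds. This is the main obstacle, and I would handle it in one of two ways. The first is to note that the intended reading (as in the introduction's motivation) is that codesets consist of labels occurring in $S$. The second is to observe that removing $S$-redundant labels from every codeset changes no labeling $\Lambda(S)$. The reduced code therefore has the same size and remains $S$-uniquely-decodable, since distinct codesets still give distinct patterns and hence remain distinct sets. Under either reading, every label has an occurrence, and the argument above applies, completing the chain of implications.
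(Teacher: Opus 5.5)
Your argument is correct and is the same definition-unwinding the paper relies on; the paper states this Observation without proof. In an $S$-non-overlapping codeset every occurrence is isolated, and the second implication is immediate.

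Your caveat about $S$-redundant labels is genuine, not pedantic. The paper does allow such labels in codesets, and then the first implication fails as literally stated. For example, $\cC=\{\{\mA\},\{\mG\}\}$ with $S=\mA\mC$ is $S$-non-overlapping and $S$-uniquely-decodable, but not $S$-partially-non-overlapping. Your second fix is therefore the one that actually justifies \autoref{cor: overlapping relations}: strip the redundant labels. This leaves each $\cO(\Lambda,S)$ and $\Lambda(S)$ unchanged, so non-overlap, unique decodability, code size and membership in $\cP(\cV)$ are all preserved.
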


\begin{corollary}\label{cor: overlapping relations}
    For any $S\in\Sigma_q^n$, we have that 
    $$
    M(S) \ge M^{\mathsf{PNO}}(S) \ge M^{\mathsf{NO}}(S).
    $$
    Moreover, for any set of executable labels $\cV$, any integers $n,q>1$ and any $S\in\Sigma_q^n$ we have that
    \begin{align*}
        M(S,\cV) &\ge M^{\mathsf{PNO}}(S,\cV) \ge M^{\mathsf{NO}}(S, \cV),\\
        M(n,q,\cV) &\ge M^{\mathsf{PNO}}(n,q,\cV) \ge M^{\mathsf{NO}}(n,q, \cV),\\
        M(q,\cV) &\ge M^{\mathsf{PNO}}(q,\cV) \ge M^{\mathsf{NO}}(q, \cV).
    \end{align*}
     \hfill\quad $\blacksquare$
\end{corollary}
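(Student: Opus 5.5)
The corollary follows from the preceding Observation by comparing the sets over which the maxima are taken. For a fixed $S\in\Sigma_q^n$ and a fixed set of executable labels $\cV$, define three families of codes: $\mathcal{F}(S,\cV)$, the $S$-uniquely-decodable codes $\cC\subseteq\cP(\cV)$; $\mathcal{F}^{\mathsf{PNO}}(S,\cV)$, the subfamily whose members are also $S$-partially-non-overlapping; and $\mathcal{F}^{\mathsf{NO}}(S,\cV)$, the subfamily whose members are also $S$-non-overlapping. The Observation states that every $S$-non-overlapping code is $S$-partially-non-overlapping, and every $S$-partially-non-overlapping code is $S$-uniquely-decodable. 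Since membership in $\cP(\cV)$ is unaffected, this gives
\[
\mathcal{F}^{\mathsf{NO}}(S,\cV)\subseteq\mathcal{F}^{\mathsf{PNO}}(S,\cV)\subseteq\mathcal{F}(S,\cV).
\]
By definition, $M^{\mathsf{NO}}(S,\cV)$, $M^{\mathsf{PNO}}(S,\cV)$ and $M(S,\cV)$ are the maxima of $|\cC|$ over these three families. A maximum over a larger family is at least the maximum over a smaller one, so $M(S,\cV)\ge M^{\mathsf{PNO}}(S,\cV)\ge M^{\mathsf{NO}}(S,\cV)$.

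The unrestricted statement $M(S)\ge M^{\mathsf{PNO}}(S)\ge M^{\mathsf{NO}}(S)$ is the same argument with $\cV$ removed, as the paper's convention of omitting $\cV$ indicates. Two small points need checking so that each maximum is well defined:

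\begin{enumerate}
\item \textbf{Nonemptiness.} The empty code, or the code $\{\varnothing\}$, satisfies all three properties vacuously, so each family is nonempty.
\item \textbf{Finiteness.} Any $S$-uniquely-decodable code has size at most $2^n$, since distinct codesets give distinct binary patterns in $\{0,1\}^n$. Hence every maximum is finite and attained.
\end{enumerate}

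The remaining inequalities follow by taking maxima on both sides. Write $f_X(S)$ for any of the three per-sequence quantities. If $f_1(S)\ge f_2(S)$ holds for every $S\in\Sigma_q^n$, then $\max_{S\in\Sigma_q^n} f_1(S)\ge\max_{S\in\Sigma_q^n} f_2(S)$. Applying this to each pair of the per-sequence chain gives the inequalities for $M(n,q,\cV)$.

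Applying the same principle again over $n\in\N$ gives the inequalities for $M(q,\cV)$. These quantities may be infinite suprema; the monotonicity argument holds equally for suprema in the extended reals, so no extra care is needed.

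No step is a real obstacle. The only place needing care is making sure the three model definitions are compared under the same constraint $\cC\subseteq\cP(\cV)$, which holds because that constraint appears identically in all three.
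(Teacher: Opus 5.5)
Your route is the paper's: the corollary is read off from the preceding Observation by nesting the feasible families and then taking maxima over $S$ and over $n$. However, one step fails under the paper's literal definitions, namely the inclusion $\mathcal{F}^{\mathsf{NO}}(S,\cV)\subseteq\mathcal{F}^{\mathsf{PNO}}(S,\cV)$. An $S$-non-overlapping codeset may contain an $S$-redundant label. Such a label contributes no occurrences, so the disjointness condition is unaffected. But it has no occurrence at all, hence no isolated one, so the codeset is not $S$-partially-non-overlapping. For instance, take $S=\mA\mC$ and $\cV=\{\mA,\mG\}$. The one-codeset code $\{\{\mA,\mG\}\}$ is $S$-uniquely-decodable and $S$-non-overlapping, but it is not $S$-partially-non-overlapping. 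So the Observation you cite is, strictly, false in this edge case. Your remark that the only point needing care is the common constraint $\cC\subseteq\cP(\cV)$ misses this.

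The inequality $M^{\mathsf{PNO}}(S,\cV)\ge M^{\mathsf{NO}}(S,\cV)$ itself survives with a short repair. Given an optimal $S$-non-overlapping code $\cC$, replace each $\Lambda\in\cC$ by $\widehat\Lambda$, obtained by deleting its $S$-redundant labels. Then:
\begin{enumerate}
\item $\cO(\widehat\Lambda,S)=\cO(\Lambda,S)$, so $\widehat\Lambda(S)=\Lambda(S)$ and $\widehat\Lambda$ is still $S$-non-overlapping.
\item Every label of $\widehat\Lambda$ now occurs, and each of its occurrences is isolated because all supports are pairwise disjoint. Hence $\widehat\Lambda$ is $S$-partially-non-overlapping.
\item $\widehat\Lambda\subseteq\Lambda\subseteq\cV$.
\item The map $\Lambda\mapsto\widehat\Lambda$ is injective on $\cC$, because distinct codesets have distinct labelings.
\end{enumerate}
This gives a code in $\mathcal{F}^{\mathsf{PNO}}(S,\cV)$ of the same size, so the comparison rests on a size-preserving map rather than an inclusion. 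The inclusion $\mathcal{F}^{\mathsf{PNO}}(S,\cV)\subseteq\mathcal{F}(S,\cV)$ is fine as you wrote it, as are the nonemptiness and finiteness checks and the passage to maxima over $S\in\Sigma_q^n$ and over $n$. The same repair handles the unrestricted case without $\cV$.
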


\section{Basic Results using Periodicity}
\label{sec: problem 1}

We first determine when the trivial upper bound $M(n,q)\le 2^n$ is attained. We then study $M(S)$ for periodic reference sequences, starting with sequences of period one and proceeding to larger periods.

\begin{observation}\label{obs: maximum general code}
For any positive integers $n$ and $q>1$,
\[
    M(n,q)=2^n
    \quad\Longleftrightarrow\quad
    n\le q.
\]
\end{observation}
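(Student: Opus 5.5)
The argument has two directions. The upper bound $M(n,q)\le 2^n$ holds because every $\Lambda(S)$ lies in $\{0,1\}^n$. So $M(n,q)=2^n$ holds exactly when some $S\in\Sigma_q^n$ makes every binary pattern of length $n$ realizable as $\Lambda(S)$ for some set of labels $\Lambda$.

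\emph{Sufficiency ($n\le q$).} Take $S=(1,2,\ldots,n)$ with all symbols distinct, which is possible since $n\le q$. For a target $b\in\{0,1\}^n$, let $\Lambda_b=\{\, s_j : b_j=1\,\}$, a set of single-symbol labels. Each symbol occurs exactly once in $S$, so $\Lambda_b(S)=b$. The map $b\mapsto\Lambda_b$ then gives an $S$-uniquely-decodable code of size $2^n$; the empty codeset corresponds to $b=0^n$.

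\emph{Necessity ($n>q$).} Suppose $n\ge q+1$ and fix any $S\in\Sigma_q^n$. By pigeonhole, some symbol $\sigma$ appears at two positions $i<j$. Consider the pattern $b=e_i$, with a single \emph{one} at position $i$ and zeros elsewhere. Any $\Lambda$ with $\Lambda(S)=b$ must contain a label with an occurrence whose support is exactly $\{i\}$, since every occurrence's support is a set of $1$s in $b$. That label is therefore the single symbol $\sigma$. But then $\sigma$ also marks position $j$, so $\Lambda(S)_j=1\neq b_j$, a contradiction. Hence at most $2^n-1$ patterns are realizable, so $M(S)<2^n$ for every $S$, and therefore $M(n,q)<2^n$.

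There is no substantive obstacle. The only point needing care is the necessity step: the argument must use the fact that runs of \emph{ones} of length one can only come from length-one labels, and any such label fires at every occurrence of its symbol. That is why the isolated pattern $e_i$ is the one to choose.
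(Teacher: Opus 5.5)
Your proof is correct and takes the same approach as the paper. Both arguments use the trivial bound $2^n$ and distinct single-symbol labels for sufficiency, and for necessity both observe that a unit pattern $0^{i-1}10^{n-i}$ can only come from a length-one label whose symbol occurs at position $i$ alone. The only cosmetic difference is that you argue contrapositively via pigeonhole on a repeated symbol, while the paper concludes directly that all symbols of $S$ must be distinct.
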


\begin{IEEEproof}
Since every labeling of a sequence of length $n$ is a binary sequence of length $n$, we have $M(n,q)\le 2^n$. If $n\le q$, choose $S=s_1s_2\cdots s_n$ with $s_1,\ldots,s_n$ distinct. Every binary sequence can then be obtained by selecting the corresponding subset of the length-one labels $\{s_1,\ldots,s_n\}$, and hence $M(S)=2^n$.

Conversely, suppose that $M(n,q)=2^n$. Then there exists $S\in\Sigma_q^n$ for which every binary sequence of length $n$ is a labeling of $S$. In particular, for every $i\in[n]$, the sequence $0^{i-1}10^{n-i}$ must be a labeling of $S$. Such a labeling can only be obtained using a length-one label whose symbol occurs only at position $i$. Thus, all the symbols of $S$ are distinct, and therefore $n\le q$.
\end{IEEEproof}

We begin the analysis of periodic sequences with a sequence of period one, i.e., the case $\pi(S)=1$. That is, $S=\sigma^n$ for some $\sigma\in\Sigma_q$ and some positive integer~$n$.

\begin{lemma}\label{lem: S for minimal code}
    If $S\in\Sigma^n_q$ consists of a single run of a symbol $\sigma\in\Sigma_q$, then $M(S)=2$, and this value is attained by the code $\cC_\sigma=\{\varnothing, \{\sigma\}\}$.
\end{lemma}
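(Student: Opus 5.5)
The plan is to show that every labeling of $S=\sigma^n$ is either the all-zero sequence $0^n$ or the all-one sequence $1^n$. This gives $M(S)\le 2$, and the code $\cC_\sigma$ attains the bound.

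For the upper bound, fix an arbitrary codeset $\Lambda$. Any label $\lambda\in\Lambda$ that contains a symbol other than $\sigma$, or whose length exceeds $n$, is $S$-redundant in the sense of \autoref{def: redundant label}. Such a label has no occurrence in $S$ and contributes nothing to the bitwise OR. So the only labels that matter are $\sigma^\ell$ with $1\le \ell\le n$.

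Consider a single label $\sigma^\ell$ with $\ell\le n$. It occurs at every starting position $i\in[n-\ell+1]$, since $S_{[i;\ell]}=\sigma^\ell$ there. The union of the supports $\{i,\ldots,i+\ell-1\}$ over $i=1,\ldots,n-\ell+1$ is all of $[n]$. Hence $\sigma^\ell(S)=1^n$.

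It follows that $\Lambda(S)=1^n$ whenever $\Lambda$ contains at least one non-redundant label, and $\Lambda(S)=0^n$ otherwise. Only two distinct labelings are possible, so any $S$-uniquely-decodable code has at most two codesets, i.e., $M(S)\le 2$.

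For the lower bound, compute the two labelings of $\cC_\sigma$ directly: $\varnothing(S)=0^n$ and $\{\sigma\}(S)=1^n$. These differ for $n\ge1$, so $\cC_\sigma$ is $S$-uniquely-decodable of size $2$, and therefore $M(S)=2$.

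The only step that needs care is checking that the supports of the occurrences of $\sigma^\ell$ cover every position, including the last ones. This holds because the occurrence starting at $n-\ell+1$ covers positions $n-\ell+1$ through $n$, and the occurrence starting at $1$ covers positions $1$ through $\ell$. Apart from this check the argument is routine.
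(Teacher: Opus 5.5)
Your proof is correct and follows essentially the same route as the paper: every label other than $\sigma^\ell$ with $1\le\ell\le n$ is $S$-redundant and contributes the all-zero word, while every $\sigma^\ell$ with $\ell\le n$ labels all of $S$. Hence every codeset is $S$-equivalent to $\varnothing$ or $\{\sigma\}$, which gives $M(S)\le 2$, and $\cC_\sigma$ attains this bound. Your explicit check that the supports of the occurrences of $\sigma^\ell$ cover all of $[n]$ is a detail the paper states without proof.
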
 

\begin{IEEEproof}
To see that $\cC_\sigma$ is $S$-uniquely-decodable note that for $\Lambda_1=\varnothing$ and $\Lambda_2=\{\sigma\}$ we have that $\Lambda_1(S)$ is the \emph{all-zero} word while $\Lambda_2(S)$ is the \emph{all-one} word.

Let $\cC$ be an $S$-uniquely-decodable labeling code with maximum size. Note that for any label $\lambda\in\Sigma^*_q$, if $\lambda\ne \sigma^i$, for any integer $0< i\le |S|$, then $\lambda$ is not a substring of $S$, i.e., $\lambda$ is $S$-redundant and $\lambda(S)$ is the \emph{all-zero} word. Otherwise, we have that $\lambda(S)$ is the \emph{all-one} word. By definition of the bitwise OR operation, for any codeset $\Lambda\in \cC$ we have that either $\Lambda\equiv_S \varnothing$  or $\Lambda\equiv_S\{\sigma\}$. Since $\cC$ is  $S$-uniquely-decodable, the latter implies that $|\cC|\le 2$.
\end{IEEEproof}

To extend \autoref{lem: S for minimal code} to more involved cases, we first present the definitions of a \emph{period}.

\begin{definition}
    For $S \in \Sigma^n_q$, an integer $1 \leq \pi \leq n-1$ is called a \emph{period} of $S$ if $\pi\mid n$ and ${S_{i} = S_{i+\pi}}$ for all $1 \leq i \leq n-\pi$. Additionally, let $\pi(S)$ be the least period of the sequence $S$ if such a period exists, and $\pi(S)=n$ otherwise.         \hfill\quad $\blacksquare$
\end{definition}

\begin{lemma} \label{lem: period two}
If $S\in\Sigma^n_q$, where $q>1, n>2$ and $\pi(S) = 2$, then 
$M(S) = 7$. 
\end{lemma}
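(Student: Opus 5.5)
The plan is to compute, for $S=(\sigma\tau)^{n/2}$, the complete set of binary words that arise as $\Lambda(S)$ over all sets of labels $\Lambda$. Since distinct codesets must give distinct labelings, $M(S)$ equals the number of such words. Here $\sigma\ne\tau$, and $n$ is even with $n\ge 4$ because $\pi(S)=2$ divides $n$ and $n>2$.

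\textbf{Step 1: reduce to single labels.} Because $\Lambda(S)$ is the bitwise OR of the single-label patterns $\lambda(S)$, the set of achievable labelings is the OR-closure of $\{\lambda(S)\}$, together with $0^n$ (from $\Lambda=\varnothing$). Every $S$-redundant label contributes $0^n$, as in the proof of \autoref{lem: S for minimal code}. So only substrings of $S$ matter, and these are the alternating words. Each is determined by its first symbol ($\sigma$ or $\tau$) and its length $k$.

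\textbf{Step 2: classify single-label patterns.}
\begin{itemize}
\item A label starting with $\sigma$ occurs exactly at the odd positions $i$ with $i+k-1\le n$.
\item A label starting with $\tau$ occurs exactly at the even positions $i$ with $i+k-1\le n$.
\item For $k\ge2$, consecutive occurrences are two apart, so their supports overlap or abut. Hence the labeled set is a single interval, from the first start to the last start plus $k-1$.
\end{itemize}
A parity check on the last admissible start then gives:
\begin{itemize}
\item the label $\sigma$ gives $P_1=(10)^{n/2}$;
\item the label $\tau$ gives $P_2=(01)^{n/2}$;
\item $\sigma$-initial labels of even length give $1^n$;
\item $\sigma$-initial labels of odd length $\ge3$ give $1^{n-1}0$;
\item $\tau$-initial labels of odd length $\ge3$ give $01^{n-1}$;
\item $\tau$-initial labels of even length give $01^{n-2}0$.
\end{itemize}
As a check, for $S=\sigma\tau\sigma\tau$ the label $\tau\sigma$ yields $0110$.

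\textbf{Step 3: take the OR-closure.} Adjoin $0^n$ to these six patterns and close under $\lor$. The only new ORs collapse back into the list:
\begin{itemize}
\item $P_1\lor P_2=1^n$;
\item $P_1\lor 01^{n-2}0=1^{n-1}0$;
\item $P_2\lor 01^{n-2}0=01^{n-1}$;
\item $P_1\lor 01^{n-1}=1^n$;
\item $1^{n-1}0\lor 01^{n-1}=1^n$.
\end{itemize}
So the closure is exactly
\[
\{0^n,\ (10)^{n/2},\ (01)^{n/2},\ 1^n,\ 1^{n-1}0,\ 01^{n-1},\ 01^{n-2}0\}.
\]
For $n\ge4$ these seven words are pairwise distinct; for instance, $01^{n-2}0$ contains the factor $11$, so it differs from $P_1$ and $P_2$.

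\textbf{Step 4: conclude.} This gives $M(S)\le7$. Equality is attained by the explicit code
\[
\{\varnothing,\{\sigma\},\{\tau\},\{\sigma\tau\},\{\sigma\tau\sigma\},\{\tau\sigma\tau\},\{\tau\sigma\}\},
\]
whose labelings are exactly the seven distinct words above, so the code is $S$-uniquely-decodable.

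The main obstacle is Step 2. The endpoint parity must be handled carefully in all four first-symbol/length-parity cases. One must also verify that the occurrences form a single contiguous interval, and in particular not overlook that even-length $\tau$-initial labels leave both ends unlabeled.
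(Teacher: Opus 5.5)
Your proof is correct and takes essentially the paper's route: the paper also restricts to substrings of $S$ and shows every substring label is $S$-equivalent to one of $\sigma,\tau,\sigma\tau,\tau\sigma,\sigma\tau\sigma,\tau\sigma\tau$, and then observes that any union of these collapses to a single one, which is what your Step~2 classification and Step~3 OR-closure carry out explicitly. One small slip: your list of nontrivial ORs omits $(01)^{n/2}\lor 1^{n-1}0=1^n$ (the mirror of $(10)^{n/2}\lor 01^{n-1}$), but this also stays in the seven-element set, so your closure claim and the count $M(S)=7$ are unaffected.
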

\begin{IEEEproof}
    Assume w.l.o.g. that $S=\mA\mC\mA\mC\ldots \mA\mC$.
     The only labels that can be considered are the ones that are substrings of $S$, i.e., the possible labels are
     \vspace{-1.1ex}
     \[
\bigcup_{t=0}^{\frac{n}{2}-1}
\left\{
(\mA\mC)^{t+1},
(\mA\mC)^t\mA,
(\mC\mA)^t\mC
\right\}
\ \cup\
\bigcup_{t=1}^{\frac{n}{2}-1}
\left\{
(\mC\mA)^t
\right\}.
\]
     Furthermore, for any ${1\le t\le {n}/{2}}$ we have that 
         ${\{\mA\mC\}\equiv_S \{(\mA\mC)^t\}}$, 
     and for any ${1\le t\le {n}/{2}-1}$ we have that 
           ${\{\mC\mA\}\equiv_S \{(\mC\mA)^t\}}$,
         $\{\mA\mC\mA\}\equiv_S \{(\mA\mC)^t\mA\}$, and
         ${\{\mC\mA\mC\}\equiv_S \{(\mC\mA)^t\mC\}}$.
     Thus, it is sufficient to consider only the labels in $\{\mA, \mC, \mA\mC, \mC\mA, \mA\mC\mA, \mC\mA\mC\}$. It can be verified that the labeling of $S$ by each of these labels is unique, and by considering the empty codeset we have an $S$-uniquely decodable code of size $7$. To see that no larger code is $S$-uniquely decodable, note that any possible codeset is $S \text{-equivalent}$ to some codeset that is composed only of labels in $\{\mA, \mC, \mA\mC, \mC\mA, \mA\mC\mA, \mC\mA\mC\}$, or to the empty codeset, and that any codeset that is a subset of the these six labels, is $S$-equivalent to a codeset that consists of a single label out of these six labels.
\end{IEEEproof}

Next, we generalize the result in \autoref{lem: period two} to any period $\pi$, in the cases where the least period size, $\pi(S)$, is at most the alphabet size. The proof, which can be found in the appendix, is done by considering all possible cases and counting the contribution of each. 
\begin{lemma}\label{lem: period bound}
    Let $q>1$ and let $S\in\Sigma^n_q$ with $\pi\triangleq\pi(S)\le \min\{n,q\}$. Then 
    $$M(S)\le \begin{cases}
        2^\pi & \text{ if } n=\pi,\\
        2^{\pi-3}(\pi^2+3\pi+4) & \text{ if } n = 2\pi\\
      {2^{\pi-2}(\pi^2+\pi+1),}
            &\text{ if } n\ge3\pi.
    \end{cases}$$
   Moreover, equality holds whenever $S_{[1;\pi]}$ consists of
    $\pi$ distinct symbols.
\end{lemma}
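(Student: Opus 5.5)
We prove the lemma by reducing every codeset to a canonical representative and then counting the canonical representatives. Write $S=P^{m}$ with $P=S_{[1;\pi]}$ and $m=n/\pi$. Every label that is not $S$-redundant is a substring $S_{[r;L]}$, determined (up to $S$-equivalence) by a starting residue $r\in[\pi]$ and a length $L$. Since $\pi\le q$, and for the equality part $P$ has $\pi$ distinct symbols, each such label occurs exactly at positions $r,r+\pi,r+2\pi,\dots$ as long as the occurrence fits in $S$. For the upper bound with arbitrary $P$ we only need that a labeling of $S$ is an OR of such ``periodic-start'' patterns. A periodic $P$ with repeated symbols can only merge some of these patterns, which only lowers the count.

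The case $n=\pi$ is immediate. The bound is the trivial $M(S)\le 2^n$, and equality for distinct symbols follows from \autoref{obs: maximum general code}.

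For $m\ge 2$ we split the labels into two types.

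\textbf{Short labels} ($L<\pi$). Such a label covers a cyclic arc of $\mathbb{Z}_\pi$. This arc either lies inside a period or wraps across a period boundary. Its labeling of $S$ is the $m$-fold repetition of that arc, truncated at the two ends: a wrapping arc loses its first partial copy and its last partial copy.

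\textbf{Long labels} ($L\ge\pi$). As in the proof of \autoref{lem: period two}, the occurrences of a long label overlap or abut. Hence its labeling is a single interval $[r,e]$, where $e$ is the largest index $\le n$ with $e\equiv r+L-1 \pmod{\pi}$. Replacing $L$ by $L-\pi$ when $L\ge 2\pi$ does not change this interval. So every long label is $S$-equivalent to one of the $\pi^2$ labels with $r\in[\pi]$ and $\pi\le L\le 2\pi-1$.

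Since OR is idempotent and intervals sharing a common point merge, a codeset is $S$-equivalent to a pair $(A,I)$. Here $A$ is a union of cyclic arcs of $\mathbb{Z}_\pi$ (the periodic part, with the boundary truncation determined by which arcs wrap), and $I$ is either empty or one of the long intervals above. Only the leftmost start and the rightmost end among the long intervals matter.

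The upper bound then follows by counting the distinct binary words $A(S)\lor I$. For $I=\varnothing$ the periodic patterns number about $2^{\pi}$. The factor $2^{\pi-2}$ in the formula suggests that, once $I$ is present, two coordinates (the positions adjacent to the start and end of $I$) become absorbed or forced. The count is therefore organized as follows:
\begin{itemize}
\item fix the start residue $r$ and the end residue of $I$ ($\pi^2$ choices, plus the choice $I=\varnothing$);
\item count the subsets of the remaining cyclic positions that are still visible outside $I$;
\item correct for the coincidences where the periodic part already covers the region near the ends of $I$, making $I$ invisible.
\end{itemize}
This should give $2^{\pi-2}(\pi^2+\pi+1)$ for $m\ge3$. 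For $m=2$ the two end regions of $I$ interact with the boundary truncation in the single middle boundary, and a separate but analogous count should give $2^{\pi-3}(\pi^2+3\pi+4)$.

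For equality when the symbols of $P$ are distinct, we exhibit a codeset for each canonical pair. Every arc and every interval $[r,e]$ is realized by a single label, so every canonical pair is realized. We then check that distinct canonical pairs yield distinct words, reading $A$ off the middle period (for $m\ge3$) and $I$ off the run structure near the ends.

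The main obstacle is the exact counting step: handling the boundary truncation of wrapping arcs together with the long interval without double counting. We would first verify the count for $\pi=2$, where the formula must return $7$ in agreement with \autoref{lem: period two}, and for $\pi=3$ by enumeration, and only then write the general inclusion–exclusion.
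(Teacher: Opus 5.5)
\paragraph{Verdict} There is a genuine gap: the exact count, which is the whole content of the lemma, is never carried out.

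\paragraph{What is correct} Your reduction to a period of distinct symbols is sound. It is essentially the paper's symbol-splitting step: every labeling of $S$ is an OR of the patterns the labels $S_{[r;L]}$ would induce if the symbols of $S_{[1;\pi]}$ were distinct, so repeated symbols can only shrink the set of achievable words. Your treatment of long labels as intervals $[r,e]$, where only the smallest start and the largest end matter, is also correct.

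\paragraph{The missing step} For any $S$, $M(S)$ is exactly the number of distinct words $\Lambda(S)$. So, for distinct symbols, both the upper bound and the equality claim amount to showing that this number is $2^{\pi-3}(\pi^2+3\pi+4)$, respectively $2^{\pi-2}(\pi^2+\pi+1)$. You never compute it. You infer the shape of the count from the target (``the factor $2^{\pi-2}$ suggests\ldots''), state that it ``should give'' the formula, and defer the inclusion--exclusion. That deferred step is the actual proof.

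\paragraph{Why the sketched count fails as stated} A pair $(A,I)$, with $A$ a union of cyclic arcs of $\mathbb{Z}_\pi$, does not determine the word. A wrapping short arc is absent at the start of the first period and at the end of the last one, so the first, middle and last periods can all differ. For $S=(\mA\mC\mG)^3$:
\begin{itemize}
\item The codesets $\{\mG\mA\}$ and $\{\mA,\mG\}$ cover the same residues but give $001101100$ and $101101101$.
\item With no long label there are already $14$ words, not about $2^\pi=8$.
\item Grouped by their run of ones through the middle period, the words containing a long label number $1,1,1,1,2,2,2,2,4$ over the nine intervals, not $2^{\pi-2}=2$ each.
\item Four of these words are already produced by short labels, and only after accounting for this does one get $14+16-4=26$.
\end{itemize}
The same coincidences show that distinct canonical pairs need not give distinct words, so your equality check also fails as stated. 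For $m=2$ your split is not even canonical. A short wrapping label then has a single occurrence and is itself an interval across the midpoint.

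\paragraph{What the paper does instead} For $n=2\pi$, the paper counts by an arbitrary label crossing the midpoint, running from $\sigma_i$ to $\sigma_j$. It takes the $2^\pi$ symmetric words and adds Cases 1--4. For $i\ge j+2$, some resulting words collapse to symmetric ones and must be subtracted. For $i\ge j+4$, the regions before and after the crossing label share residues, so their patterns are no longer independent. For $n\ge3\pi$, it adds two further families:
\begin{itemize}
\item the $N_{\mathsf{gap}}$ pairs whose gap can be filled by a longer label;
\item the $2^{\pi-2}-1$ symmetric pairs whose two end bits are one.
\end{itemize}
Your proposal needs an exact count of this kind, carried out for general $\pi$.
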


\section{Fixed-Length Labels}\label{sec: fix length}
In this section, we consider the special case in which all the labels have the same length. That is, we are interested in $S$-uniquely-decodable labeling codes (for some sequence $S\in\Sigma^*_q$), in which all the labels that are being used are of a specific length.

\begin{definition}
    A labeling code $\cC$ is called an \emph{$\ell$-labeling code} if for any label $\lambda$ in any codeset $\Lambda\in\cC$ we have that $|\lambda|=\ell$. 
    
    \hfill\quad $\blacksquare$
\end{definition} 

We consider \autoref{problem: executable labels} (a restricted set of labels) in the special case where the executable labels are all the labels of length $\ell$, for some integer $\ell$, i.e., $\cV=\Sigma_q^\ell$. To this end, for a reference sequence $S$ and an integer $n$ we define $M_\ell(S)\triangleq M(S,\Sigma^\ell_q)$ and $M_\ell(n, q) \triangleq M(n,q,\Sigma^\ell_q)$, where $M(S,\Sigma^\ell_q)$ is the maximal size of an $\ell$-labeling $S$-uniquely-decodable code, for a given $S\in\Sigma_q^n$, and $ M(n,q,\Sigma^\ell_q)$ is the maximal value of $M(S,\Sigma^\ell_q)$, when considering any $S\in\Sigma_q^n$ (see~\autoref{problem: executable labels}). 
Accordingly, $\log_2 M_\ell(S)$ is the maximum number of bits that can be represented on $S$ using labels of length $\ell$, while $\log_2 M_\ell(n,q)$ is the corresponding maximum when the reference sequence may also be selected.

For $\mathsf{X}\in\{\mathsf{NO},\mathsf{PNO}\}$, we define
\[
M_\ell^{\mathsf{X}}(S)
    \triangleq M^{\mathsf{X}}(S,\Sigma_q^\ell)
\quad\text{and}\quad
M_\ell^{\mathsf{X}}(n,q)
    \triangleq M^{\mathsf{X}}(n,q,\Sigma_q^\ell).
\]
Similarly, we use the notations $S_{(n,\ell)}$ and $\cC_{(n,\ell)}$ for the related reference sequence and code.

We start with an upper bound on $M_\ell(n, q)$ for any integers $q>1$ and $0<\ell\le n$. Then, we show that in the special case where $\ell\ge\log_q(n-\ell+1)$, this bound is tight. 
Before we present this bound and an explicit construction that meets it for $\ell\ge\log_q(n-\ell+1)$, let us define  $\cR_\ell$ to be the set of all binary sequences in which any run of \emph{ones} is of length at least~$\ell$. Additionally, let  $\cR_\ell(n)\triangleq \cR_\ell\cap\{0,1\}^n$ be the set of length $n$ in $\cR_\ell$. 

The constraint $\cR_\ell$ is strongly related to the well-known \emph{run-length limited} (\emph{RLL}) constraints; see, for example,~\cite{marcus2001introduction} for a comprehensive survey. We recall the RLL definition before explaining its difference from the $R_\ell$ constraint.

\begin{definition}
A binary sequence satisfies the \emph{$(d,k)$-RLL constraint} if between every two consecutive \emph{ones}, there are at least $d$ and at most $k$ \emph{zeros}.
\hfill\quad $\blacksquare$
\end{definition}

Exchanging the roles of \emph{zeros} and \emph{ones} in the $(\ell,\infty)$-RLL constraint would require at least $\ell$ \emph{ones} between every two successive occurrences of \emph{zero}. In particular, such a constraint forbids two adjacent \emph{zeros}. This is different from $\cR_\ell$, which allows runs of \emph{zeros} of arbitrary length. More precisely, between two successive \emph{zeros}, the number of \emph{ones} in a sequence from $\cR_\ell$ may be either zero or at least $\ell$. Furthermore, the same minimum-length requirement applies to runs of \emph{ones} at the beginning or end of the sequence. Thus, $\cR_\ell$ is not equivalent to the $(\ell,\infty)$-RLL constraint, even after exchanging the roles of \emph{zeros} and \emph{ones}. Nevertheless, it is what is called a ``regular constrained system", and its capacity can be determined from a finite-state presentation of the constrained system~\cite{marcus2001introduction}.

\begin{theorem}\label{th: upper bound}
Let $\ell,n$, and $q>1$ be positive integers. For every sequence $S\in\Sigma_q^n$,
\[
M_\ell^{\mathsf{NO}}(S)
\le
M_\ell^{\mathsf{PNO}}(S)
\le
M_\ell(S)
\le
|\cR_\ell(n)|.
\]
\end{theorem}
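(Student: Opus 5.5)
The first two inequalities follow from \autoref{cor: overlapping relations}, specialized to $\cV=\Sigma_q^\ell$. Every $S$-non-overlapping code is also $S$-partially-non-overlapping, and every $S$-partially-non-overlapping code is $S$-uniquely-decodable. Hence the maxima over these nested families satisfy $M_\ell^{\mathsf{NO}}(S)\le M_\ell^{\mathsf{PNO}}(S)\le M_\ell(S)$. The substance of the theorem is therefore the last inequality, $M_\ell(S)\le|\cR_\ell(n)|$.

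For that inequality we use an injection argument. Let $\cC$ be any $S$-uniquely-decodable $\ell$-labeling code. By definition, the map $\Lambda\mapsto\Lambda(S)$ is injective on $\cC$, so $|\cC|$ equals the number of distinct patterns $\Lambda(S)$, $\Lambda\in\cC$. It therefore suffices to show that $\Lambda(S)\in\cR_\ell(n)$ for every set $\Lambda\subseteq\Sigma_q^\ell$.

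For the membership claim, write $\Lambda(S)_j=1$ if and only if $j\in\mathsf{supp}(o)$ for some occurrence $o\in\cO(\Lambda,S)$. Each support is an interval $\{i,\ldots,i+\ell-1\}\subseteq[n]$ of exactly $\ell$ consecutive positions. Hence the set of ones in $\Lambda(S)$ is a union of intervals, each of length exactly $\ell$. Any maximal run of ones is a union of some of these intervals, because every position in the run lies in some interval and that interval is entirely made of ones, so it is contained in the run. The run is nonempty, so it contains at least one full interval and has length at least $\ell$. This holds for runs at the boundary of the sequence as well, since every interval lies inside $[n]$. The empty codeset yields the all-zero word, which also lies in $\cR_\ell(n)$. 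Thus $\Lambda(S)\in\cR_\ell(n)$, and so $|\cC|\le|\cR_\ell(n)|$. Taking the maximum over $\cC$ gives $M_\ell(S)\le|\cR_\ell(n)|$.

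I do not expect a real obstacle. The only point needing care is the claim that a maximal run contains a whole support interval, which is where the fixed length $\ell$ of the labels is used. One should also note that $S$-redundant labels contribute nothing to the pattern, so they do not affect the argument.
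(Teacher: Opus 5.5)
Your proposal is correct and follows essentially the same argument as the paper: the nesting of the three code families gives the first two inequalities, and injectivity of $\Lambda\mapsto\Lambda(S)$ together with $\Lambda(S)\in\cR_\ell(n)$ gives the last. The only difference is that you spell out why every maximal run of ones contains a full length-$\ell$ support, which the paper asserts directly from the definition of an $\ell$-labeling code.
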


\begin{IEEEproof}
The first two inequalities follow directly from the definitions of the three models: every $S$-non-overlapping code is $S$-partially-non-overlapping, and every $S$-partially-non-overlapping code is an unrestricted $S$-uniquely-decodable code.

Let $\cC$ be an $S$-uniquely-decodable $\ell$-labeling code. Note that by the definition of $\ell$-labeling code, for any $\Lambda \in \cC$, we have that $\Lambda(S)$ is a binary string of length $|S|$, in which any run of \emph{ones} is of length $\ell$ or more. Additionally, since $\cC$ is $S$-uniquely-decodable, for any  $\Lambda_1, \Lambda_2 \in \cC$, we have that $\Lambda_1(S)\ne \Lambda_2(S)$, and thus,  $|\cC|\le  |\cR_\ell(n)|$.
\end{IEEEproof}

\begin{corollary}\label{cor: upper bound}
    For any three positive integers $\ell, n$ and $q>1$ we have that $M_\ell(n,q)\le |\cR_\ell(n)|$.  \hfill\quad $\blacksquare$
\end{corollary}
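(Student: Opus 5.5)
The corollary follows by taking the maximum over reference sequences in \autoref{th: upper bound}. By definition, $M_\ell(n,q)=M(n,q,\Sigma_q^\ell)=\max_{S\in\Sigma_q^n}M(S,\Sigma_q^\ell)=\max_{S\in\Sigma_q^n}M_\ell(S)$. This maximum is over the finite set $\Sigma_q^n$, so it is attained by some $S^*\in\Sigma_q^n$.

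Applying \autoref{th: upper bound} to this $S^*$ gives $M_\ell(S^*)\le|\cR_\ell(n)|$. The key point is that the bound $|\cR_\ell(n)|$ depends only on $n$ and $\ell$, not on $S$ or $q$. Hence the same bound holds for every element of the family over which we maximize, and therefore for the maximum itself:
\[
M_\ell(n,q)=M_\ell(S^*)\le|\cR_\ell(n)|.
\]

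There is no real obstacle, only one edge case to check. If $\ell>n$, every label of length $\ell$ is $S$-redundant and every codeset labels $S$ as the all-zero word, so $M_\ell(S)=1$. The all-zero word lies in $\cR_\ell(n)$, so $|\cR_\ell(n)|\ge1$ and the inequality still holds. In any case, \autoref{th: upper bound} is stated for all positive $\ell,n$, so this case is already covered and needs no separate argument.

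By the same reasoning, combined with \autoref{cor: overlapping relations}, we also obtain $M^{\mathsf{NO}}_\ell(n,q)\le M^{\mathsf{PNO}}_\ell(n,q)\le M_\ell(n,q)\le|\cR_\ell(n)|$.
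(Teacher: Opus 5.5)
Your proposal is correct and follows the paper's route exactly: the corollary is stated as an immediate consequence of \autoref{th: upper bound}, obtained by maximizing the $S$-independent bound $M_\ell(S)\le|\cR_\ell(n)|$ over all $S\in\Sigma_q^n$. Your check of the case $\ell>n$ is accurate, and, as you note, it is already covered by the theorem.
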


In \autoref{th: code for l-repeat-free} and \autoref{lem: not repeat-free} we show that the last inequality in \autoref{th: upper bound} is tight if and only if $S$ is \emph{$\ell$-repeat-free} sequence~\cite{elishco2021repeat}, i.e., any substring of length $\ell$ in $S$ is unique.

\begin{definition}\label{def: repeat-free}
    A sequence $S\in\Sigma_q^n$ is $\ell$-repeat-free for some integer $\ell\in[n]$ if for any two integers $1\le i<j\le n-\ell+1$ we have that $S_{[i;\ell]}\ne S_{[j;\ell]}$.
        \hfill\quad $\blacksquare$
\end{definition}
 For the non-overlapping model, the bound can be strengthened since every run of \emph{ones} in the induced labeling must have length divisible by $\ell$. Moreover, when $S$ is $\ell$-repeat-free, the non-overlapping and partially-non-overlapping models coincide, since every nonredundant label has exactly one occurrence in $S$. We return to these two models in \autoref{subsec: overlap models}.

We next determine the asymptotic size of $\cR_\ell(n)$ when $\ell$ is fixed.

\begin{lemma}\label{lem: R ell capacity}
Let $\ell$ be a fixed positive integer, and let $\rho_\ell\in(1,2]$ be the unique real solution in this interval of
\[
    \rho_\ell^{\ell-1}(\rho_\ell-1)^2=1.
\]
Then
\[
\mathsf{cap}(\cR_\ell)
\triangleq
\lim_{n\to\infty}
\frac{\log_2|\cR_\ell(n)|}{n}
=
\log_2\rho_\ell.
\]
\end{lemma}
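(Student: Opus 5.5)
The plan is to derive a linear recurrence for $r_n\triangleq|\cR_\ell(n)|$, compute its generating function, and read off the growth rate from the dominant pole. We count by the last symbol and the final run. Let $a_n$ be the number of sequences in $\cR_\ell(n)$ ending with a zero (or empty), and $b_n$ the number ending with a one. A sequence ending in zero is any element of $\cR_\ell(n-1)$ followed by $0$, so $a_n=r_{n-1}$ for $n\ge1$. A sequence ending in one ends in a maximal run $1^k$ with $k\ge\ell$. This run is either the whole sequence, or it is preceded by a sequence in $\cR_\ell(n-k)$ ending in $0$. Hence
\[
b_n=\sum_{k=\ell}^{n} a_{n-k}, \qquad a_0=1,
\]
with the term $k=n$ accounting for the all-ones word.

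Next we pass to generating functions. Put $A(x)=\sum_n a_nx^n$. Then $A(x)=1+xR(x)$ and $B(x)=A(x)\,x^\ell/(1-x)$, where $R=A+B-1$ (we only need to be careful about whether the empty word is counted). Solving gives
\[
R(x)+1=A(x)\Bigl(1+\frac{x^\ell}{1-x}\Bigr),\qquad A(x)=\frac{1}{1-x\bigl(1+\frac{x^\ell}{1-x}\bigr)}.
\]
Therefore the denominator is $(1-x)-x(1-x)-x^{\ell+1}=(1-x)^2-x^{\ell+1}$. Substituting $x=1/\rho$ and multiplying by $\rho^{\ell+1}$ turns this into $\rho^{\ell-1}(\rho-1)^2=1$, which is exactly the equation in the statement. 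Equivalently, one can describe $\cR_\ell$ by a graph with states $0$ (last run zero) and $1,\dots,\ell$ (current run of ones of length $i$, with $\ell$ meaning "at least $\ell$"), and compute the characteristic polynomial of its adjacency matrix. That presentation is irreducible, aperiodic and deterministic, so by~\cite{marcus2001introduction} the capacity is $\log_2$ of its Perron root. I would use whichever route gives the cleaner polynomial.

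The main obstacle is the analytic part: showing that the equation has a unique root $\rho_\ell\in(1,2]$, and that this root dominates every other root in modulus.
\begin{itemize}
\item \textbf{Uniqueness.} The function $f(\rho)=\rho^{\ell-1}(\rho-1)^2$ is continuous and strictly increasing on $(1,\infty)$. We have $f(1)=0$ and $f(2)=2^{\ell-1}\ge1$, so there is exactly one root in $(1,2]$; it equals $2$ only when $\ell=1$, where $\cR_1$ is unconstrained.
\item \textbf{Dominance.} Using the graph presentation, the Perron--Frobenius theorem gives a positive simple eigenvalue of largest modulus. It must be a real root larger than $1$, because $r_n\ge2^{\lfloor n/(\ell+1)\rfloor}$ grows exponentially. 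The only real root of $\rho^{\ell-1}(\rho-1)^2-1$ larger than $1$ is $\rho_\ell$.
\end{itemize}
Then $r_n=\Theta(\rho_\ell^n)$, so $\lim\frac{\log_2 r_n}{n}=\log_2\rho_\ell$. If only the generating function is used, the alternative is to argue via Pringsheim's theorem: the coefficients are nonnegative, so the radius of convergence equals the smallest positive singularity $1/\rho_\ell$. Together with $\limsup r_n^{1/n}=\rho_\ell$ and $r_n$ being nondecreasing (append a zero), this gives the limit.
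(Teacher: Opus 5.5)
Your proposal is correct and is essentially the paper's proof. The paper uses the same $(\ell+1)$-state automaton and observes that its adjacency matrix $A_\ell$ is primitive. It writes $|\cR_\ell(n)|=(A_\ell^n)_{0,0}+(A_\ell^n)_{0,\ell}$, so that Perron--Frobenius gives growth rate $\rho(A_\ell)$, and computes $\det(xI-A_\ell)=x^{\ell-1}(x-1)^2-1$ directly; your generating-function denominator is just $\det(I-xA_\ell)=(1-x)^2-x^{\ell+1}$. It then concludes from the monotonicity of $x^{\ell-1}(x-1)^2$ on $(1,\infty)$. Your bound $r_n\ge 2^{\lfloor n/(\ell+1)\rfloor}$ also supplies a check the paper leaves implicit, namely that the Perron root exceeds $1$.

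The only loose step is in your optional Pringsheim fallback. There, monotonicity of $r_n$ does not by itself turn $\limsup_n r_n^{1/n}=\rho_\ell$ into a limit. What would work is supermultiplicativity, $r_{m+n}\ge r_m r_n$ (which holds because concatenation preserves $\cR_\ell$), together with Fekete's lemma. Your main Perron--Frobenius route does not need this, since it gives $r_n=\Theta(\rho_\ell^n)$ directly from primitivity.
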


\begin{IEEEproof}
Consider the finite-state automaton with states $v_0,v_1,\ldots,v_\ell$. The state $v_0$ represents the empty sequence or a sequence ending in a \emph{zero}; the state $v_i$, for $1\le i<\ell$, represents a sequence ending in a run of exactly $i$ \emph{ones}; and $v_\ell$ represents a sequence ending in a run of at least $\ell$ \emph{ones}. The initial state is $v_0$, and the accepting states are $v_0$ and $v_\ell$. The automaton is illustrated below.

\begin{center}
\begin{tikzpicture}[
    >=stealth,
    shorten >=1pt,
    node distance=1.25cm,
    on grid,
    auto,
    initial text={},
    every state/.style={
        minimum size=7mm,
        inner sep=1pt
    }
]
\node[state,initial,accepting] (v0) {$v_0$};
\node[state] (v1) [right=of v0] {$v_1$};
\node (dots) [right=of v1] {$\cdots$};
\node[state] (vlm) [right=of dots] {$v_{\ell-1}$};
\node[state,accepting] (vl) [right=of vlm] {$v_\ell$};

\path[->]
(v0)  edge[loop above] node {$0$} (v0)
      edge              node {$1$} (v1)
(v1)  edge              node {$1$} (dots)
(dots) edge              node {$1$} (vlm)
(vlm) edge              node {$1$} (vl)
(vl)  edge[loop above] node {$1$} (vl)
      edge[bend left=42] node[below] {$0$} (v0);
\end{tikzpicture}
\end{center}

Every transition not shown from one of the intermediate states leads to a rejecting sink state, which is omitted from the diagram.

For a directed graph with ordered vertices $v_0,v_1,\ldots,v_\ell$, its \emph{adjacency matrix} is the matrix $A_\ell=(a_{i,j})$, where $a_{i,j}$ is the number of directed edges from $v_i$ to $v_j$. Thus, the adjacency matrix of the displayed graph is
\[
A_\ell=
\begin{pmatrix}
1&1&0&\cdots&0\\
0&0&1&\cdots&0\\
\vdots&&\ddots&\ddots&\vdots\\
0&\cdots&0&0&1\\
1&0&\cdots&0&1
\end{pmatrix}.
\]

A nonnegative square matrix $A$ is called \emph{primitive} if there exists a positive integer $m$ such that every entry of $A^m$ is positive. Recall that the Perron--Frobenius~\cite[Chapter 3]{marcus2001introduction} theorem states that if $A$ is primitive, then its spectral radius $\rho(A)$ is a positive eigenvalue that is strictly larger in absolute value than all its other eigenvalues. Moreover, the exponential growth rate of the entries of $A^n$ is $\rho(A)$.

The displayed graph is strongly connected and contains a self-loop, and hence $A_\ell$ is primitive. Since the automaton is deterministic, the number of accepted sequences of length $n$ is
\[
|\cR_\ell(n)|
=
(A_\ell^n)_{0,0}+(A_\ell^n)_{0,\ell}.
\]
It therefore follows from the Perron--Frobenius theorem that
\[
\mathsf{cap}(\cR_\ell)
=
\lim_{n\to\infty}
\frac{\log_2|\cR_\ell(n)|}{n}
=
\log_2\rho(A_\ell).
\]

It remains to determine $\rho(A_\ell)$. Recall that the eigenvalues of a matrix $A_\ell$ are the roots of $\det(xI-A_\ell)$. A direct calculation gives
\[
\det(xI-A_\ell)
=
x^{\ell-1}(x-1)^2-1.
\]
Hence, $\rho(A_\ell)$ satisfies
\[
\rho(A_\ell)^{\ell-1}
\bigl(\rho(A_\ell)-1\bigr)^2
=
1.
\]

The function $x^{\ell-1}(x-1)^2$ is strictly increasing from zero to infinity on $(1,\infty)$, so this equation has a unique solution greater than one. Since its value at $x=2$ is at least one, this solution belongs to $(1,2]$ and is precisely $\rho_\ell$. Therefore,
\[
\mathsf{cap}(\cR_\ell)=\log_2\rho_\ell,
\]
as claimed.
\end{IEEEproof}

\begin{corollary}\label{cor: RLL capacity}
For every fixed positive integer $\ell$ and every ${q>1}$,
\[
    \limsup_{n\to\infty}
    \frac{\log_2 M_\ell(n,q)}{n}
    \le
    \log_2\rho_\ell,
\]
where $\rho_\ell\in(1,2]$ is the unique real solution in this interval of
\[
    \rho_\ell^{\ell-1}(\rho_\ell-1)^2=1.
\]
\end{corollary}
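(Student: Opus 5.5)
This follows by combining \autoref{cor: upper bound} with \autoref{lem: R ell capacity}. By \autoref{cor: upper bound}, for every $n$ we have $M_\ell(n,q)\le|\cR_\ell(n)|$ for the fixed $\ell$ and $q>1$. We may take $n\ge\ell$, which matters only for the limit. The logarithm is monotone, so
\[
\frac{\log_2 M_\ell(n,q)}{n}\le\frac{\log_2|\cR_\ell(n)|}{n}
\]
for every such $n$.

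Taking the limit superior of both sides preserves the inequality. By \autoref{lem: R ell capacity}, the right-hand side converges, and its limit equals $\mathsf{cap}(\cR_\ell)=\log_2\rho_\ell$. Hence the $\limsup$ of the left-hand side is at most $\log_2\rho_\ell$, with $\rho_\ell$ the unique root in $(1,2]$ of $\rho_\ell^{\ell-1}(\rho_\ell-1)^2=1$, exactly as characterized in that lemma.

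The one point to check is that $\log_2 M_\ell(n,q)$ is well defined, i.e., that $M_\ell(n,q)\ge 1$. This holds because the code consisting of the empty codeset alone is trivially $S$-uniquely-decodable for every $S$. All the substantive work, namely the Perron--Frobenius computation of the growth rate, is already contained in \autoref{lem: R ell capacity}, so I expect no real obstacle in this argument.
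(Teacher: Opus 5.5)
Your proposal is correct and takes the same approach as the paper: the paper also obtains the bound immediately by combining \autoref{cor: upper bound} with \autoref{lem: R ell capacity}. Your extra remarks (monotonicity of the logarithm, and $M_\ell(n,q)\ge 1$ via the code consisting of the empty codeset) only spell out routine steps that the paper leaves implicit.
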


\begin{IEEEproof}
The result follows immediately from \autoref{cor: upper bound}
and \autoref{lem: R ell capacity}.
\end{IEEEproof}

Next, we show that the bound in \autoref{cor: upper bound} is tight by presenting an explicit construction for a code $\cC_{(n,\ell)}$ that meets this bound with equality when $S$ is an \emph{$\ell$-repeat-free} sequence.

\begin{definition}\label{def: de~Bruijn}
    A de~Bruijn sequence of order $\ell$ is a cyclic sequence of length $q^\ell$  in which every possible length-$\ell$ substring appears exactly once.  If we unwrap such a cyclic sequence into an acyclic one (by appending the  first $\ell-1$ symbols to the end), we obtain a sequence of length  $q^\ell+\ell-1$ in which all length-$\ell$ substrings are distinct. Such a sequence exists for any alphabet of size $q$ and any positive integer $\ell$~\cite{van1987circuits, etzion2024sequences}. \hfill\quad $\blacksquare$
\end{definition}
\begin{remark}\label{remark: de~Bruijn}
     An $\ell$-repeat-free sequence of length $n\le q^\ell+\ell-1$ over $\Sigma_q$ can be obtained by taking any length-$n$ substring of the latter. Hence, an $\ell$-repeat-free sequence of length $n$ exists if and only if  $n \le q^\ell+\ell-1$, or equivalently, if and only if  $\ell \ge \log_q (n-\ell+1)$.     \hfill\quad $\blacksquare$
\end{remark}

\begin{construction}\label{const: meet uppper bound}
    Let $S$ be an $\ell$-repeat-free sequence of length~$n$ over $\Sigma_q$. 
    For any $X=(x_1,\ldots, x_n)\in \cR_\ell(n)$, we define the codeset $\Lambda_X$ as follows. 
    \begin{enumerate}
        \item Initialize an empty set $\Lambda_X =\varnothing$. 
        \item For any run of \emph{ones} in $X$ of length $\ell'\geq \ell$, $X_{[i;\ell']}$:
        \begin{enumerate}
            \item Add the labels $S_{[i;\ell]},S_{[i+\ell;\ell]},\ldots,S_{\left[i+\ell\left(
\left\lfloor{\ell'}/{\ell}\right\rfloor-1
\right);\ell
\right]}
$ to $\Lambda_X$.
            \item  If  $\ell'/\ell$ is not an integer, add the label $S_{[i+\ell'-\ell;\ell]}$ into $\Lambda_X$.
        \end{enumerate}
    \end{enumerate}  
    That is, for any $X\in\cR_\ell(n)$, the codeset $\Lambda_X$ is composed of labels that corresponds to the locations in which $X$ has runs of \emph{ones}, where for each run of \emph{ones} we consider all the length-$\ell$ consecutive non-overlapping labels (in the case where $\ell$ doesn't divide the length of the run, we also add the label that corresponds to the length-$\ell$ suffix of that run).   
    Finally, we define the code 
    $$\cC_{(n,\ell)} = \{\Lambda_X: X\in \cR_\ell(n)\}.$$
        \hfill\quad $\blacksquare$
\end{construction}

\begin{example}
Let $S\in\DNA^{25}$, and $X\in\cR_4(25)$ be the following sequences, 
\vspace{-2ex}
\[
\arraycolsep=0.8pt\def\arraystretch{2.0}
\begin{array}{{ccccccccccccccccccccccccccc}}
S & = & \mA& \mA& \mA& \mA& \mC& \mA& \mA& \mA& \mG& \mA& \mA& \mA& \mT& \mA& \mA& \mC& \mC& \mA& \mA& \mC& \mG& \mA& \mA& \mC& \mT, \vspace{-1.5ex}\\
X & = & 1& 1& 1& 1& 1& 1& 1& 1& 1& 1& 1& 0& 0& 0& 0& 0& 1& 1& 1& 1& 0& 0& 0& 0& 0.
\end{array}
\vspace{-1ex}
\]
The sequence $S$ is a $4$-repeat-free sequence, and the codeset $\Lambda_X$ which is obtained by \autoref{const: meet uppper bound} for the binary sequence $X$ is
$\Lambda_X = \left\{ 
\mA\mA\mA\mA, \ \mC\mA\mA\mA, \ \mA\mG\mA\mA, \ \mC\mA\mA\mC
\right\}$,
where the first three labels correspond to the first run of \emph{ones} (in $X$) and the last label corresponds to the second run of \emph{ones}. 
    \hfill\quad $\blacksquare$
\end{example}
\begin{theorem}\label{th: code for l-repeat-free}
    For any $\ell$-repeat-free  sequence $S\in\Sigma^n_q$, the code $\cC_{(n,\ell)}$ obtained by \autoref{const: meet uppper bound} is an $S$-uniquely-decodable $\ell$-labeling code of size $|\cR_\ell(n)|$. 
\end{theorem}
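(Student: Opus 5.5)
The plan is to show directly that $\Lambda_X(S)=X$ for every $X\in\cR_\ell(n)$. Once this holds, the map $X\mapsto\Lambda_X$ is injective: distinct $X$ give distinct labelings, hence distinct codesets. Therefore $|\cC_{(n,\ell)}|=|\cR_\ell(n)|$, and the code is $S$-uniquely-decodable by definition. It is clearly an $\ell$-labeling code, since every label added in \autoref{const: meet uppper bound} is a length-$\ell$ substring of $S$.

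The key tool is $\ell$-repeat-freeness (\autoref{def: repeat-free}). Each label $\lambda=S_{[j;\ell]}$ added by the construction occurs in $S$ exactly at position $j$ and nowhere else. Hence $\lambda(S)$ is exactly the indicator of the window $\{j,\ldots,j+\ell-1\}$, and $\Lambda_X(S)$ is the union (bitwise OR) of the windows $\{j,\ldots,j+\ell-1\}$ over the starting positions $j$ chosen by the construction.

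It then remains to check that these windows cover exactly the ones of $X$. Fix a run of ones $X_{[i;\ell']}$ with $\ell'\ge\ell$, as guaranteed by $X\in\cR_\ell(n)$. The windows starting at $i,i+\ell,\ldots,i+\ell(\lfloor\ell'/\ell\rfloor-1)$ tile $\{i,\ldots,i+\ell\lfloor\ell'/\ell\rfloor-1\}$. If $\ell\nmid\ell'$, the extra window starting at $i+\ell'-\ell$ covers $\{i+\ell'-\ell,\ldots,i+\ell'-1\}$. Because $\ell'\ge\ell$, this window starts at or after $i$, so it lies inside the run. It also covers the leftover tail, since $i+\ell'-\ell\le i+\ell\lfloor\ell'/\ell\rfloor$. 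Thus the windows associated with the run cover precisely $\{i,\ldots,i+\ell'-1\}$, and no position outside the run. Positions where $X$ is zero are covered by no window, so $\Lambda_X(S)=X$.

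The only delicate point is the uniqueness of occurrences. Without repeat-freeness, an added label could occur elsewhere in $S$ and set bits outside the runs. The hypothesis rules this out, and it also guarantees that the labels are well defined and distinct, so that counting codesets equals counting $X$'s. I also need $\ell'\ge\ell$ to ensure the suffix window does not extend past the start of the run. With these checks the theorem follows, and combined with \autoref{th: upper bound} it shows the bound is attained.
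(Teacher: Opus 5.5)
Your proof is correct and follows the paper's argument. Repeat-freeness forces each selected label to occur only at its selected position, so $\Lambda_X(S)=X$, which gives injectivity of $X\mapsto\Lambda_X$, the size $|\cR_\ell(n)|$, and $S$-unique decodability. You also spell out the per-run covering check, including the suffix window when $\ell\nmid\ell'$, which the paper asserts without detail.
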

\begin{IEEEproof}
     For any $\Lambda\in\cC_{(n,\ell)}$, any label $\lambda\in\Lambda$ is of the form $S_{[j;\ell]}$ for some $j$, and by definition we have that all the labels are of length exactly $\ell$. 
     That is, $\cC_{(n,\ell)}$ is an $\ell$-labeling code. Furthermore, for any $X\in\cR_\ell(n)$, the labels selected by the construction cover exactly the runs of \emph{ones} in $X$. Since $S$ is $\ell$-repeat-free, each selected label occurs in $S$ only at the position from which it was selected. Hence, $\Lambda_X(S)=X$. Thus, for any two distinct sequences $X,Y\in \cR_\ell(n)$,
     we have that $\Lambda_X(S)=X\ne Y=\Lambda_Y(S)$ which implies that $\cC_{(n,\ell)}$ is an $S$-uniquely-decodable code of size $|\cR_\ell(n)|$. 
\end{IEEEproof}

\begin{remark}
    We note that to encode a binary message using this construction, we first need to encode it into a sequence $X\in\cR_\ell(n)$ and then apply \autoref{const: meet uppper bound} on $X$. An efficient encoder of the $\cR_\ell$ constraint can be achieved by constructing a deterministic finite automaton that accepts all the constrained words, and then utilizing the algorithm in \cite{adler1983algorithms}, known as the state-splitting algorithm, to design encoders with rate approaching the capacity. 
            \hfill\quad $\blacksquare$
    \end{remark}

The following lemma shows that the size of an $\ell$-labeling code is maximized for $S$ of length $n$, which is $\ell$-repeat-free (if such $S$ exists). 
\begin{lemma}\label{lem: not repeat-free}
    If $S\in\Sigma^n_q$ is not $\ell$-repeat-free, then $$M_\ell(S)<|\cR_\ell(n)|.$$ 
\end{lemma}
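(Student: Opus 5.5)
The plan is to exhibit a specific sequence $X\in\cR_\ell(n)$ that is \emph{not} the labeling $\Lambda(S)$ for any set $\Lambda\subseteq\Sigma_q^\ell$. Every $\ell$-labeling of $S$ lies in $\cR_\ell(n)$, and an $S$-uniquely-decodable code has distinct labelings for distinct codesets. Hence $M_\ell(S)$ is at most the number of elements of $\cR_\ell(n)$ that are realized as labelings. If one element is missing, we get $M_\ell(S)\le|\cR_\ell(n)|-1$.

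To find the missing element, use that $S$ is not $\ell$-repeat-free. So there exist $1\le i<j\le n-\ell+1$ with $S_{[i;\ell]}=S_{[j;\ell]}=\lambda$. Take $X=0^{i-1}1^{\ell}0^{n-i-\ell+1}$, a single run of exactly $\ell$ \emph{ones} starting at position $i$; clearly $X\in\cR_\ell(n)$. Suppose $\Lambda(S)=X$ for some set $\Lambda$ of length-$\ell$ labels. Position $i$ is labeled, so some occurrence $(\mu,k)\in\cO(\Lambda,S)$ has $i\in\mathsf{supp}(\mu,k)=\{k,\ldots,k+\ell-1\}$. This support must lie inside the set of \emph{ones} of $X$, namely $\{i,\ldots,i+\ell-1\}$. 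Since both sets have exactly $\ell$ elements, they coincide, so $k=i$ and $\mu=S_{[i;\ell]}=\lambda$. Thus $\lambda\in\Lambda$, so the occurrence $(\lambda,j)$ forces $\Lambda(S)_{[j;\ell]}=1^\ell$. Because $j>i$, position $j+\ell-1>i+\ell-1$ is then a \emph{one} of $\Lambda(S)$ lying outside the run, which contradicts $\Lambda(S)=X$.

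Therefore $X$ is never a labeling, and the strict inequality $M_\ell(S)<|\cR_\ell(n)|$ follows. The only delicate point is the support-containment step: it forces the covering label to be exactly $S_{[i;\ell]}$, and it relies on all labels having the same length $\ell$. Beyond that I expect no real obstacle. One degenerate case needs checking: if $n<\ell$ there are no length-$\ell$ windows, so every such $S$ is vacuously repeat-free and the hypothesis never holds. Otherwise $i$ and $j$ exist and the argument applies.
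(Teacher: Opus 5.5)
Your proof is correct and follows the same route as the paper. It uses the same witness $0^{i-1}1^\ell0^{n-\ell-i+1}\in\cR_\ell(n)$, which cannot be a labeling because any codeset producing it must contain $S_{[i;\ell]}$, and the occurrence at $j$ then labels positions outside the run. Your support-containment step makes explicit what the paper only asserts, namely that the label covering position $i$ must be exactly $S_{[i;\ell]}$.
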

\begin{IEEEproof}
    By the proof of~\autoref{th: upper bound}, for any codeset $\Lambda$, we have that $\Lambda(S)\in \cR_\ell(n)$.  Additionally, 
    since $S\in\Sigma^n_q$ is not $\ell$-repeat-free, there exist two indices $1\le i<j\le n-\ell+1$, such that $S_{[i;\ell]}=S_{[j;\ell]}$. Hence, assume $\Lambda(S) = 0^{i-1}1^\ell0^{n-\ell-i+1}\in\cR_\ell(n)$. It implies that there exists a label $\lambda\in \Lambda$ such that $\lambda = S_{[i;\ell]}$ and therefore 
    $\Lambda(S)_{[i;\ell]} = \Lambda(S)_{[j;\ell]} = 1^\ell$, i.e., 
    $\Lambda(S) \ne 0^{i-1}1^\ell0^{n-\ell-i+1}$, a contradiction. Since $ 0^{i-1}1^\ell0^{n-\ell-i+1}\in \cR_\ell(n)$ it implies that $M_\ell(S)<|\cR_\ell(n)|.$
    \end{IEEEproof}

 \autoref{th: upper bound}, \autoref{cor: upper bound} and \autoref{lem: not repeat-free} state that when there exists an $\ell$-repeat-free sequence of length $n$ over $\Sigma_q$, an $S$-uniquely-decodable $\ell$-labeling code $\cC$ is of maximum size if and only if $S$ is $\ell$-repeat-free. However, for any sequence $S$ of length $n$, if $S$ is $\ell$-repeat-free then by~\autoref{def: de~Bruijn} and~\autoref{remark: de~Bruijn}, ${|S|\le q^\ell+\ell-1}$. That is, \autoref{const: meet uppper bound} is relevant only for $\ell\ge \log_q(|S|-\ell+1)$. Moreover, as discussed in~\autoref{def: de~Bruijn} and~\autoref{remark: de~Bruijn}, an $\ell$-repeat-free sequence of length $n$ over $\Sigma_q$ exists if and only if $\ell\ge \log_q(n-\ell+1)$  which  leads to the following consequences.

\begin{corollary}\label{cor: optimal code for S}
    Assume $n,\ell,$ and $q>1$ be integers. 
    \begin{enumerate}
        \item If $\ell< \log_q(n-\ell+1)$ then $M_\ell(n,q)<|\cR_\ell(n)|.$ 
        \item If $\ell\geq \log_q(n-\ell+1)$ then $M_\ell(S) = M_\ell(n,q)$ if and only if $S$ is an $\ell$-repeat-free sequence.
        \item $M_\ell(n,q) = |\cR_\ell(n)|$ if and only if $\ell\geq \log_q(n-\ell+1)$.
    \end{enumerate}
     \hfill\quad $\blacksquare$
\end{corollary}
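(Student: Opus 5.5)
The plan is to derive all three items of \autoref{cor: optimal code for S} by combining \autoref{cor: upper bound}, \autoref{th: code for l-repeat-free}, \autoref{lem: not repeat-free}, and the existence criterion of \autoref{remark: de~Bruijn}. By that remark, an $\ell$-repeat-free sequence of length $n$ over $\Sigma_q$ exists if and only if $n\le q^\ell+\ell-1$, that is, if and only if $\ell\ge\log_q(n-\ell+1)$. Each item then reduces to this dichotomy.

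For item 1, assume $\ell<\log_q(n-\ell+1)$. Then no sequence $S\in\Sigma_q^n$ is $\ell$-repeat-free, so \autoref{lem: not repeat-free} gives $M_\ell(S)<|\cR_\ell(n)|$ for every $S$. Since $\Sigma_q^n$ is finite, the maximum $M_\ell(n,q)=\max_S M_\ell(S)$ is attained by some $S$, and hence is also strictly smaller than $|\cR_\ell(n)|$. The finiteness remark is the only subtle point here: it ensures the strict inequality survives taking the maximum rather than a supremum.

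For item 2, assume $\ell\ge\log_q(n-\ell+1)$. Then an $\ell$-repeat-free $S_0\in\Sigma_q^n$ exists, and \autoref{th: code for l-repeat-free} gives $M_\ell(S_0)\ge|\cR_\ell(n)|$. Together with \autoref{cor: upper bound} this yields $M_\ell(n,q)=|\cR_\ell(n)|$. Now take any $S\in\Sigma_q^n$. If $S$ is $\ell$-repeat-free, the same two results give $M_\ell(S)=|\cR_\ell(n)|=M_\ell(n,q)$. If $S$ is not $\ell$-repeat-free, \autoref{lem: not repeat-free} gives $M_\ell(S)<|\cR_\ell(n)|=M_\ell(n,q)$. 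This establishes the equivalence in item 2.

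Item 3 follows directly. The computation in item 2 gives the implication from $\ell\ge\log_q(n-\ell+1)$ to $M_\ell(n,q)=|\cR_\ell(n)|$. The converse is the contrapositive of item 1.

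I expect no genuine obstacle in this argument. The one step needing care is the equivalence $n\le q^\ell+\ell-1\iff \ell\ge\log_q(n-\ell+1)$, which is immediate by taking $\log_q$ of $n-\ell+1\le q^\ell$.
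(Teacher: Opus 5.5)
Your proposal is correct and follows the same route as the paper. The paper derives this corollary, in the paragraph preceding it, by combining the same ingredients you use: the upper bound $|\cR_\ell(n)|$, the fact that \autoref{const: meet uppper bound} attains it on $\ell$-repeat-free sequences, \autoref{lem: not repeat-free}, and the de~Bruijn existence criterion $n\le q^\ell+\ell-1$. You also state explicitly that the maximum is taken over the finite set $\Sigma_q^n$, so the strict inequality in item 1 is preserved; the paper leaves this point implicit.
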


\begin{corollary}
      Assume $n,\ell,$ and $q>1$ be integers such that $\ell\ge \log_q(n-\ell+1)$. It holds that
      \[
    \limsup_{n\to\infty}
    \frac{\log_2 M_\ell(n,q)}{n}
    =
    \log_2\rho_\ell,
\]
where $\rho_\ell\in(1,2]$ is the unique real solution in this interval of
\[
    \rho_\ell^{\ell-1}(\rho_\ell-1)^2=1.
\]  \hfill\quad $\blacksquare$
\end{corollary}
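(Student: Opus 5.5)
The plan is to reduce the statement to two results already established: the exact value of $M_\ell(n,q)$ in the repeat-free regime, and the capacity of $\cR_\ell$. Under the hypothesis $\ell\ge\log_q(n-\ell+1)$, item~3 of \autoref{cor: optimal code for S} gives
\[
M_\ell(n,q)=|\cR_\ell(n)|.
\]
This follows from \autoref{th: code for l-repeat-free}, applied to an $\ell$-repeat-free $S$ of length $n$ (one exists by \autoref{remark: de~Bruijn}), together with the upper bound of \autoref{cor: upper bound}. Hence, for every admissible pair $(n,\ell)$,
\[
\frac{\log_2 M_\ell(n,q)}{n}=\frac{\log_2|\cR_\ell(n)|}{n},
\]
and the question reduces entirely to the growth rate of $|\cR_\ell(n)|$.

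For the upper bound, \autoref{cor: RLL capacity} already gives $\limsup_n \frac{1}{n}\log_2 M_\ell(n,q)\le\log_2\rho_\ell$, and this holds without any repeat-freeness assumption. For the lower bound I will use the identity above along admissible $n$, together with \autoref{lem: R ell capacity}, which states that $\frac{1}{n}\log_2|\cR_\ell(n)|\to\log_2\rho_\ell$. To make this explicit, I would use the automaton of \autoref{lem: R ell capacity} and primitivity of $A_\ell$ to write
\[
|\cR_\ell(n)|=(A_\ell^n)_{0,0}+(A_\ell^n)_{0,\ell}=c\,\rho_\ell^n(1+o(1)),\qquad c>0.
\]
Here $c>0$ because the Perron eigenvectors of a primitive matrix are strictly positive. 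Consequently $\log_2|\cR_\ell(n)|=n\log_2\rho_\ell+O(1)$.

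The main obstacle is interpreting the limit correctly. For a fixed $\ell$, the condition $\ell\ge\log_q(n-\ell+1)$ holds only for $n\le q^\ell+\ell-1$, so a literal $n\to\infty$ limit with $\ell$ fixed is vacuous. I would therefore read the statement in one of two ways.

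\begin{enumerate}
\item \emph{Sequence of pairs.} Consider a sequence of pairs $(n,\ell)$ with $n\to\infty$ that always satisfies the hypothesis. Then $\ell$ must grow, at least like $\log_q n$, and the claim becomes $\frac{1}{n}\log_2 M_\ell(n,q)-\log_2\rho_\ell\to0$. Proving this requires a bound on $\log_2|\cR_\ell(n)|-n\log_2\rho_\ell$ that is uniform in $\ell$. I would try to obtain it from the transfer-matrix identity by splitting each word of $\cR_\ell(n)$ into runs, which bounds $|\cR_\ell(n)|$ between $\rho_\ell^{\,n-O(\ell)}$ and $O(n)\rho_\ell^{\,n}$. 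Since $\ell=O(\log n)$ in the relevant range, both errors are $o(n)$.
\item \emph{Growth rate of the sequence.} Alternatively, read the statement as identifying the growth rate of the sequence $M_\ell(n,q)=|\cR_\ell(n)|$, which holds on the admissible range, with $\log_2\rho_\ell$. In that case it follows immediately from the two displays above.
\end{enumerate}

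I expect the uniform-in-$\ell$ estimate in the first reading to be the only step needing genuine care. Everything else is a direct combination of \autoref{cor: optimal code for S}, \autoref{cor: RLL capacity}, and \autoref{lem: R ell capacity}.
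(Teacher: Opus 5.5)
Your reduction is the paper's own. The paper gives no separate proof and presents the corollary as following from item~3 of \autoref{cor: optimal code for S} together with \autoref{lem: R ell capacity}. You also correctly located where this breaks, but your second reading does not repair it. For fixed $\ell$, the identity $M_\ell(n,q)=|\cR_\ell(n)|$ holds only for the finitely many $n\le q^\ell+\ell-1$. The limit $\frac1n\log_2|\cR_\ell(n)|\to\log_2\rho_\ell$ concerns $n\to\infty$. The two displays never apply to the same $n$, so nothing ``follows immediately''. Worse, the fixed-$\ell$ claim is false. Unique decodability forces the codesets to be distinct subsets of $\Sigma_q^\ell$, so $M_\ell(n,q)\le 2^{q^\ell}$ for every $n$. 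Hence $\limsup_{n\to\infty}\frac1n\log_2 M_\ell(n,q)=0<\log_2\rho_\ell$; for instance, $\rho_1=2$ while $M_1(n,q)\le 2^q$. So for fixed $\ell$ only the inequality of \autoref{cor: RLL capacity} survives, and it is never tight. The paper's implicit derivation has the same defect, which the paper essentially concedes in the paragraph preceding \autoref{th: logarithmic label length}.

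Your first reading is a different statement, and in the form you give (difference tending to $0$) it is trivial. As $\ell\to\infty$ we have $\rho_\ell\to1$, and $0\le\frac1n\log_2 M_\ell(n,q)\le\frac{n+1}{n}\log_2\rho_\ell$ (shown below). So both terms vanish without any uniform estimate. The informative version is the ratio $\log_2 M_\ell(n,q)/(n\log_2\rho_\ell)\to1$, and your sandwich does deliver it. Here is a short route:
\begin{itemize}
\item The map $(X,Y)\mapsto X0Y$ gives $|\cR_\ell(a+b+1)|\ge|\cR_\ell(a)|\,|\cR_\ell(b)|$, so Fekete's lemma yields $|\cR_\ell(n)|\le\rho_\ell^{n+1}$.
\item The Perron vector of $A_\ell$ is $u=(1,\rho_\ell-1,\rho_\ell(\rho_\ell-1),\ldots,\rho_\ell^{\ell-1}(\rho_\ell-1))$, with $\max_j u_j=u_\ell=1/(\rho_\ell-1)$.
\item Appending $1^\ell$ to each path of length $n-\ell$ from $v_0$ gives $|\cR_\ell(n)|\ge(\rho_\ell-1)\rho_\ell^{n-\ell}=\rho_\ell^{\,n-(3\ell-1)/2}$.
\end{itemize}
So the ratio tends to $1$ whenever $\ell=o(n)$. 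The hypothesis only bounds $\ell$ from below, so $\ell=O(\log n)$ must be assumed rather than deduced. For $\ell=\lceil c\log_q n\rceil$, using $\log_2\rho_\ell\sim 2\log_2\ell/\ell$, this gives $\log_2 M_\ell(n,q)=(1+o(1))\,2n\log_2\ell/\ell$, which sharpens the $\Theta$ in \autoref{th: logarithmic label length}. But that is a corrected statement, not a proof of the corollary as written.
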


The distribution of the shortest run of ones in binary sequences was studied in~\cite{makri2007shortest}. The binomial-sum formula in~\autoref{lem: eta} was established in the ISIT 2024 conference version of this work. Subsequently, Balado and Silvestre~\cite[Eq.~(61)]{balado2026systematic} gave an expression for the number of length-$n$ binary sequences containing no runs of ones of lengths $1,\ldots,k$. For $\ell\ge2$, setting $k=\ell-1$ in their expression gives a different representation of $|\cR_\ell(n)|$. For completeness, we include our combinatorial proof in the appendix.

\begin{lemma}\label{lem: eta} For any two positive integers $n,\ell$, it holds that 

\[
|\cR_\ell(n)|= \sum_{t=0}^{\lfloor \frac{n+1}{\ell+1}\rfloor}\binom{n-t(\ell-1)+1}{2t}.
\]  \hfill\quad $\blacksquare$
\end{lemma}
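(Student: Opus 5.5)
We count sequences in $\cR_\ell(n)$ by the number $t$ of runs of \emph{ones}, and show that for each fixed $t$ the count is $\binom{n-t(\ell-1)+1}{2t}$. Summing over $t$ gives the formula. The range of $t$ is determined by feasibility: $t$ runs of length at least $\ell$ must be separated by at least one \emph{zero}, so $t\ell+(t-1)\le n$, i.e., $t\le \lfloor (n+1)/(\ell+1)\rfloor$. For larger $t$ the binomial coefficient may still be nonzero only in degenerate cases. We will check that $n-t(\ell-1)+1<2t$ exactly when $t(\ell+1)>n+1$, so the upper limit of the sum is natural and harmless.

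For fixed $t$, we encode $X\in\cR_\ell(n)$ with exactly $t$ runs of \emph{ones} by its $2t$ run boundaries. These are the start positions $a_1<\dots<a_t$ and the positions just after the ends, $b_1<\dots<b_t$. They satisfy
\[
1\le a_1<b_1<a_2<b_2<\cdots<a_t<b_t\le n+1,
\]
with two constraints: $b_k-a_k\ge \ell$, since run lengths are at least $\ell$, and $a_{k+1}>b_k$, since runs are separated by a \emph{zero}. Allowing $a_{k+1}=b_k$ would merge two runs, so we need strict inequality. Conversely, any such $2t$-tuple determines a unique $X$ with exactly $t$ runs, each of length at least $\ell$. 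So the count equals the number of such tuples.

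Next we shift to remove the length constraints. Define $a_k'=a_k-(k-1)(\ell-1)$ and $b_k'=b_k-k(\ell-1)$. Then $b_k-a_k\ge\ell$ becomes $b_k'-a_k'\ge 1$, and $a_{k+1}-b_k\ge1$ becomes $a_{k+1}'-b_k'\ge1$. The shifted values therefore form an arbitrary strictly increasing $2t$-tuple in $\{1,\dots,n+1-t(\ell-1)\}$, and the map is a bijection. This gives exactly $\binom{n-t(\ell-1)+1}{2t}$ choices. The case $t=0$ contributes the all-zero word, matching $\binom{n+1}{0}=1$.

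The main point requiring care is the bookkeeping of the shift: verifying that the $b$'s absorb $\ell-1$ per run, that the separation constraint is strict rather than $\ge 0$, and that the top boundary $b_t\le n+1$ maps to $n+1-t(\ell-1)$. We also need to confirm the endpoint behavior, namely that runs at the start or end of $X$ are allowed via $a_1=1$ or $b_t=n+1$, and that the case $\ell=1$ reduces correctly to $\sum_t\binom{n+1}{2t}=2^n$.
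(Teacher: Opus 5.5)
Your proof is correct, but it takes a different route from the paper's.

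The paper proceeds in three steps:
\begin{itemize}
\item It splits $\cR_\ell(n)$ according to the first bit, and classifies each part by the total number $r$ of runs of \emph{zeros} and \emph{ones}.
\item It counts each class by stars and bars, obtaining $\binom{n-\lceil (r-1)/2\rceil(\ell-1)-1}{r-1}$ for sequences starting with a \emph{zero} and $\binom{n-\lfloor (r+1)/2\rfloor(\ell-1)-1}{r-1}$ for sequences starting with a \emph{one}.
\item It regroups the two sums by the parity of $r$ and applies Pascal's rule, treating odd and even $n$ separately, until the terms collapse to $\binom{n-t(\ell-1)+1}{2t}$.
\end{itemize}

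You instead parametrize the sequences with exactly $t$ runs of \emph{ones} by their $2t$ boundary points in $\{1,\ldots,n+1\}$. The virtual position $n+1$ handles a final run of \emph{ones} uniformly. The shift $a_k'=a_k-(k-1)(\ell-1)$, $b_k'=b_k-k(\ell-1)$ turns all constraints into plain strict increase. As a result, the $t$-th summand appears directly as the number of sequences with exactly $t$ runs of \emph{ones}.

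Your argument is shorter and avoids both the parity case split and the Pascal-identity bookkeeping. It also gives each term of the sum a combinatorial meaning. The paper's route is a more mechanical composition count that needs no bijection, but it pays for this with the merging step.

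Your derivation of the upper limit, $n-t(\ell-1)+1<2t \iff t(\ell+1)>n+1$, is the correct one. The paper's last paragraph writes the threshold as $\frac{n-1}{\ell+1}$, which is a slip. Finally, no sequence has more than $\lfloor (n+1)/(\ell+1)\rfloor$ runs of \emph{ones}, so the sum up to that limit is already exhaustive. Your aside about the binomial possibly being nonzero ``in degenerate cases'' beyond that limit is therefore unnecessary.
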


The asymptotic bound in \autoref{cor: RLL capacity} is obtained for a fixed value of $\ell$. However, for a fixed alphabet size $q$, the existence of an $\ell$-repeat-free sequence of length $n$ requires
\[
\ell\ge\log_q(n-\ell+1).
\]
Thus, when $n$ grows, the label length used in \autoref{const: meet uppper bound} must also grow. Consequently, the fixed-$\ell$ capacity does not describe the asymptotic size of the codes obtained by this construction. We therefore analyze $|\cR_\ell(n)|$ in the logarithmic regime
\[
\ell=\left\lceil c\log_q n\right\rceil,
\qquad c\ge1,
\]
using the expression given in \autoref{lem: eta}.

\begin{theorem}\label{th: logarithmic label length}
Let $q>1$ and $c\ge1$ be constants. For every positive integer $n$, set
\[
\ell=\left\lceil c\log_q n\right\rceil.
\]
Let $S\in\Sigma_q^n$ be an $\ell$-repeat-free sequence, and let $\cC_{(n,\ell)}$ be the code obtained by \autoref{const: meet uppper bound}. Then
\[
M_\ell(n,q)
=
M_\ell(S)
=
|\cC_{(n,\ell)}|
=
|\cR_\ell(n)|
=
2^{\Theta\left(
\frac{\log\log n}{\log n}\,n
\right)}
\]
as $n\to\infty$, where the constants implicit in the $\Theta$ notation may depend on $q$ and $c$.
\end{theorem}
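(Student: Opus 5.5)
First, I would dispose of the equalities in the chain. An $\ell$-repeat-free $S$ of length $n$ exists when $\ell\ge\log_q(n-\ell+1)$, and this holds for $\ell=\lceil c\log_q n\rceil$ with $c\ge1$, since $q^\ell\ge n^c\ge n\ge n-\ell+1$. So by \autoref{remark: de~Bruijn} such an $S$ exists. \autoref{th: code for l-repeat-free} then gives $|\cC_{(n,\ell)}|=|\cR_\ell(n)|$, and \autoref{cor: optimal code for S} gives $M_\ell(S)=M_\ell(n,q)=|\cR_\ell(n)|$. The remaining work is purely asymptotic: show $\log_2|\cR_\ell(n)|=\Theta(n\log\log n/\log n)$ with $\ell=\Theta(\log n)$. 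For this I will use the binomial sum of \autoref{lem: eta}, and write $L\triangleq\log_2 n$ so that $\ell=\Theta(L)$.

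\textbf{Upper bound.} The sum in \autoref{lem: eta} has at most $n/\ell+2$ terms, and each term is at most $\binom{n+1}{2t}$ with $2t\le 2(n+1)/(\ell+1)$. Since $\ell$ is large, $2(n+1)/(\ell+1)\le (n+1)/2$, so every term is at most $\binom{n+1}{k}$ with $k=\lfloor 2(n+1)/(\ell+1)\rfloor$. This gives
\[
|\cR_\ell(n)|\le (n/\ell+2)\binom{n+1}{k}\le (n/\ell+2)\Big(\frac{e(n+1)}{k}\Big)^{k}.
\]
Taking logarithms, $\log_2|\cR_\ell(n)|\le O(\log n)+k\log_2(e(\ell+1)/2+o(1))=O\big(\tfrac{n}{\ell}\log\ell\big)$. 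With $\ell=\Theta(\log n)$ and $\log \ell=\Theta(\log\log n)$, this is $O(n\log\log n/\log n)$.

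\textbf{Lower bound.} I keep a single term of the sum, choosing $t=\lfloor n/(2\ell)\rfloor$; this lies in the summation range since $n/(2\ell)\le (n+1)/(\ell+1)$ for $\ell\ge1$. Then $n-t(\ell-1)+1\ge n/2$ and $2t\le n/\ell$. Hence
\[
|\cR_\ell(n)|\ge\binom{\lceil n/2\rceil}{2t}\ge\Big(\frac{n/2}{2t}\Big)^{2t}\ge\Big(\frac{\ell}{2}\Big)^{2t}.
\]
Taking logarithms gives $\log_2|\cR_\ell(n)|\ge 2t\log_2(\ell/2)=\Omega\big(\tfrac{n}{\ell}\log\ell\big)=\Omega(n\log\log n/\log n)$. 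Here $2t\ge n/\ell-2$, which is still $\Theta(n/\ell)$. As an alternative, the same bound follows directly by counting the sequences in $\cR_\ell(n)$ that consist of $t$ runs of ones of length exactly $\ell$ placed freely in $n-t\ell\ge n/2$ slots.

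\textbf{Main obstacle.} The only delicate step is the bookkeeping of constants, which all depend only on $q$ and $c$. First, $\ell$ may be taken large, because $n\to\infty$. Second, the ratio $\ell/\log_2 n$ lies between $c/\log_2 q$ and twice that value. Third, the condition $k\le (n+1)/2$, needed so that $\binom{n+1}{j}$ is increasing up to $k$, holds for large $n$. I expect no genuine difficulty beyond these checks. The key point is that both bounds are $\Theta\big((n/\ell)\log\ell\big)$, and that the dominant term of the sum sits at $t=\Theta(n/\ell)$.
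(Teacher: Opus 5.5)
Your proposal is correct and follows the paper's proof essentially step for step: the equalities come from \autoref{th: code for l-repeat-free} and \autoref{cor: optimal code for S}, and the lower bound keeps the single term $t=\lfloor n/(2\ell)\rfloor$ of the sum in \autoref{lem: eta} and applies $\binom{N}{K}\ge (N/K)^K$ to get $(\ell/2)^{2t}$, exactly as in the paper. The only cosmetic difference is in the upper bound. You bound the sum by the number of terms times the largest term, $\binom{n+1}{k}\le (e(n+1)/k)^k$, whereas the paper bounds $\sum_{k\le 2T}\binom{n+1}{k}$ by $(e(n+1)/(2T))^{2T}$ via the weighting trick; both yield $2^{O((n/\ell)\log\ell)}$.
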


\begin{IEEEproof}
Since $c\ge1$, for all sufficiently large $n$,
\[
\ell
\ge
\log_q n
\ge
\log_q(n-\ell+1).
\]
Therefore, \autoref{cor: optimal code for S} gives
\[
M_\ell(n,q)=M_\ell(S)=|\cR_\ell(n)|,
\]
while \autoref{th: code for l-repeat-free} gives
\[
|\cC_{(n,\ell)}|=|\cR_\ell(n)|.
\]
It remains to estimate $|\cR_\ell(n)|$.

By \autoref{lem: eta},
\[
|\cR_\ell(n)|
=
\sum_{t=0}^{\left\lfloor\frac{n+1}{\ell+1}\right\rfloor}
\binom{n-t(\ell-1)+1}{2t}.
\]
Set $T\triangleq
\left\lfloor\frac{n+1}{\ell+1}\right\rfloor.$
Since $T=\Theta(n/\ell)$ and $2T\le n+1$, we have
\[
|\cR_\ell(n)| \le
\sum_{t=0}^{T}\binom{n+1}{2t}\le
\sum_{k=0}^{2T}\binom{n+1}{k}.
\]
Moreover, as $2T/(n+1)\le 1$, we have
\begin{align*}
\left(\frac{2T}{n+1}\right)^{2T}
\sum_{k=0}^{2T}\binom{n+1}{k}
&\le
\sum_{k=0}^{2T}
\binom{n+1}{k}
\left(\frac{2T}{n+1}\right)^k\\
&\le
\left(1+\frac{2T}{n+1}\right)^{n+1}\\
&\le e^{2T}.
\end{align*}
Therefore,
\[
|\cR_\ell(n)|
\le
\left(\frac{e(n+1)}{2T}\right)^{2T}
=
2^{O\left((n/\ell)\log\ell\right)}.
\]

For the reverse inequality, take
\[
t=\left\lfloor\frac{n}{2\ell}\right\rfloor.
\]
For all sufficiently large $n$, we have $1\le t\le T$. Using the term corresponding to this value of $t$ in \autoref{lem: eta} and the
bound $\binom NK\ge(N/K)^K$, we obtain
\[
|\cR_\ell(n)|
\ge
\binom{n-t(\ell-1)+1}{2t}
\ge
\left(
\frac{n-t(\ell-1)+1}{2t}
\right)^{2t}.
\]
Since $t\le n/(2\ell)$,
\[
n-t(\ell-1)+1\ge\frac n2
\]
and
\[
2t\le\frac n\ell.
\]
Therefore,
\[
|\cR_\ell(n)|
\ge
\left(\frac{\ell}{2}\right)^{2t}
=
2^{\Omega\left((n/\ell)\log\ell\right)},
\]
where the last equality follows from
$t=\Theta(n/\ell)$ and
$\log(\ell/2)=\Theta(\log\ell)$.

Consequently,
\[
|\cR_\ell(n)|
=
2^{\Theta\left((n/\ell)\log\ell\right)}.
\]
Finally, $\ell=\Theta(\log n)$ and $\log\ell=\Theta(\log\log n)$, which proves the claim.
\end{IEEEproof}

\subsection{Non-overlapping and partially non-overlapping codes}\label{subsec: overlap models}

We first derive an upper bound for the non-overlapping model and show that it is attained by every $\ell$-repeat-free reference sequence. For this purpose, let $\cR_\ell^{\mathsf{NO}}(n)$ denote the set of binary sequences of length $n$ in which the length of every run of
\emph{ones} is divisible by $\ell$.

\begin{theorem}\label{th: repeat-free overlap models}
For every $S\in\Sigma_q^n$,
\[
    M_\ell^{\mathsf{NO}}(S)
    \le
    \left|\cR_\ell^{\mathsf{NO}}(n)\right|.
\]
Additionally, if $S$ is $\ell$-repeat-free, then
\[
    M_\ell^{\mathsf{NO}}(S)
    =
    M_\ell^{\mathsf{PNO}}(S)
    =
    \left|\cR_\ell^{\mathsf{NO}}(n)\right|.
\]
\end{theorem}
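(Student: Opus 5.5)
The plan has two parts: the upper bound, which is an injection into $\cR_\ell^{\mathsf{NO}}(n)$, and the equality, which adapts \autoref{const: meet uppper bound}.

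\textbf{Upper bound.} Let $\cC$ be an $S$-non-overlapping, $S$-uniquely-decodable $\ell$-labeling code, and fix $\Lambda\in\cC$.

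By \autoref{def: non-overlapping}, the supports $\mathsf{supp}(o)$, $o\in\cO(\Lambda,S)$, are pairwise disjoint intervals of length exactly $\ell$. The set of positions where $\Lambda(S)=1$ is their union.

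Take a maximal run of \emph{ones} in $\Lambda(S)$. Every position in it lies in some support, and each support is an interval of ones, so it lies entirely inside one run. Hence the run is partitioned by the disjoint supports it contains, and its length is a multiple of $\ell$. This gives $\Lambda(S)\in\cR_\ell^{\mathsf{NO}}(n)$.

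Unique decodability makes $\Lambda\mapsto\Lambda(S)$ injective on $\cC$. Therefore $|\cC|\le|\cR_\ell^{\mathsf{NO}}(n)|$, and so $M_\ell^{\mathsf{NO}}(S)\le|\cR_\ell^{\mathsf{NO}}(n)|$.

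\textbf{Equality for $\ell$-repeat-free $S$.} Assume $S$ is $\ell$-repeat-free.

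First, $M_\ell^{\mathsf{PNO}}(S)\le|\cR_\ell^{\mathsf{NO}}(n)|$. Every label in a codeset may be taken non-redundant, since a redundant label does not change $\Lambda(S)$. Each non-redundant label occurs in $S$ exactly once, so it has a single occurrence. The partially-non-overlapping condition then says this unique occurrence is disjoint from all other occurrences. Thus every $S$-partially-non-overlapping codeset is $S$-non-overlapping, and $M_\ell^{\mathsf{PNO}}(S)= M_\ell^{\mathsf{NO}}(S)$. Combined with the upper bound, this gives $M_\ell^{\mathsf{PNO}}(S)\le|\cR_\ell^{\mathsf{NO}}(n)|$.

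The reverse inequality $M_\ell^{\mathsf{NO}}(S)\ge|\cR_\ell^{\mathsf{NO}}(n)|$ needs a code, and we use \autoref{const: meet uppper bound} restricted to $X\in\cR_\ell^{\mathsf{NO}}(n)$. Since $\ell$ divides every run length, step 2(b) is never triggered. So $\Lambda_X$ consists of the labels $S_{[i+k\ell;\ell]}$ tiling each run of $X$ consecutively.

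As in \autoref{th: code for l-repeat-free}, repeat-freeness means each such label occurs only at its chosen position. Hence $\cO(\Lambda_X,S)$ is exactly this tiling, whose supports are pairwise disjoint, and $\Lambda_X(S)=X$. The resulting code is therefore $S$-non-overlapping and $S$-uniquely-decodable, of size $|\cR_\ell^{\mathsf{NO}}(n)|$.

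Chaining $|\cR_\ell^{\mathsf{NO}}(n)|\le M_\ell^{\mathsf{NO}}(S)\le M_\ell^{\mathsf{PNO}}(S)\le|\cR_\ell^{\mathsf{NO}}(n)|$ yields the claimed equalities.

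The main point needing care is the upper-bound step: that every run is a union of whole supports. This holds because the supports are contiguous blocks of ones that are pairwise disjoint, and their union is exactly the set of ones. Reducing PNO to NO only requires discarding redundant labels, which is harmless because they do not affect the labeling.
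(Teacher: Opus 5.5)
Your proof is correct and follows the paper's argument essentially step for step. It uses the same injection $\Lambda\mapsto\Lambda(S)$ into $\cR_\ell^{\mathsf{NO}}(n)$ via pairwise disjoint length-$\ell$ supports, the same block-tiling codesets (\autoref{const: meet uppper bound} with step 2(b) never triggered) for $\ell$-repeat-free $S$, and the same observation that unique occurrences force partially-non-overlapping codesets to be non-overlapping; your WLOG about discarding redundant labels is harmless, and it is in fact automatic for partially-non-overlapping codesets, since every label there must have an isolated occurrence.
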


\begin{IEEEproof}
Let $\cC$ be an $S$-uniquely-decodable, $S$-non-overlapping $\ell$-labeling code, and let $\Lambda\in\cC$. For every occurrence $o=(\lambda,i)\in\cO(\Lambda,S)$, we have $|\lambda|=\ell$, and hence
\[
    |\mathsf{supp}(o)|=\ell.
\]
Moreover, since $\Lambda$ is $S$-non-overlapping, the supports $\{\mathsf{supp}(o):o\in\cO(\Lambda,S)\}$ are pairwise disjoint. Therefore, every run of \emph{ones} in $\Lambda(S)$ is formed by adjacent disjoint supports of length $\ell$, and its length is divisible by $\ell$. Thus,
\[
    \Lambda(S)\in\cR_\ell^{\mathsf{NO}}(n).
\]
Since $\cC$ is $S$-uniquely-decodable, its codesets induce distinct
labelings, and consequently
\[
    |\cC|\le\left|\cR_\ell^{\mathsf{NO}}(n)\right|.
\]

Assume now that $S$ is $\ell$-repeat-free. For every $X\in\cR_\ell^{\mathsf{NO}}(n)$, partition each run of \emph{ones} in $X$ into consecutive blocks of length $\ell$, and let $\Lambda_X$ consist of the corresponding length-$\ell$ substrings of $S$. Since $S$ is $\ell$-repeat-free, every label in $\Lambda_X$ occurs exactly once in $S$. Therefore, $\Lambda_X$ is $S$-non-overlapping and $\Lambda_X(S)=X$. This proves that
\[
M_\ell^{\mathsf{NO}}(S)
=
\left|\cR_\ell^{\mathsf{NO}}(n)\right|.
\]

Finally, when $S$ is $\ell$-repeat-free, every nonredundant length-$\ell$ label has exactly one occurrence in $S$. Hence, an $S$-partially-non-overlapping codeset is necessarily $S$-non-overlapping, which proves the second equality.
\end{IEEEproof}

\begin{corollary}\label{cor: strict overlap models}
If $S$ is $\ell$-repeat-free, $\ell>1$, and $n>\ell$, then
\[
M_\ell^{\mathsf{NO}}(S)
=
M_\ell^{\mathsf{PNO}}(S)
<
M_\ell(S).
\]
\end{corollary}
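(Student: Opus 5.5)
The equality $M_\ell^{\mathsf{NO}}(S)=M_\ell^{\mathsf{PNO}}(S)=|\cR_\ell^{\mathsf{NO}}(n)|$ is already given by \autoref{th: repeat-free overlap models}. Likewise, $M_\ell(S)=|\cR_\ell(n)|$ follows from \autoref{th: code for l-repeat-free} together with \autoref{th: upper bound}. So the corollary reduces to the purely combinatorial strict inequality $|\cR_\ell^{\mathsf{NO}}(n)|<|\cR_\ell(n)|$ whenever $\ell>1$ and $n>\ell$.

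First, every binary sequence whose runs of \emph{ones} all have length divisible by $\ell$ has every run of length at least $\ell$. Hence $\cR_\ell^{\mathsf{NO}}(n)\subseteq\cR_\ell(n)$, which gives the non-strict inequality.

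Second, for strictness we exhibit one element of $\cR_\ell(n)\setminus\cR_\ell^{\mathsf{NO}}(n)$. Since $n\ge\ell+1$, take
\[
X=1^{\ell+1}0^{n-\ell-1}.
\]
Its only run of \emph{ones} has length $\ell+1\ge\ell$, so $X\in\cR_\ell(n)$. Since $\ell>1$, we have $\ell\nmid\ell+1$ (indeed $\ell+1\equiv1\pmod\ell$), so $X\notin\cR_\ell^{\mathsf{NO}}(n)$.

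Chaining these steps gives the claim. There is no real obstacle; the only points to check are the two hypotheses. The condition $n>\ell$ ensures that a run of length $\ell+1$ fits in a sequence of length $n$, and $\ell>1$ ensures that $\ell+1$ is not a multiple of $\ell$. For $\ell=1$ both sets equal $\{0,1\}^n$, so the hypothesis is necessary.
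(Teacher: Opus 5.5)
Your proof is correct and follows the paper's argument essentially verbatim. Both reduce the claim to $|\cR_\ell^{\mathsf{NO}}(n)|<|\cR_\ell(n)|$ via \autoref{th: repeat-free overlap models} and \autoref{th: code for l-repeat-free}, and both use the witness $1^{\ell+1}0^{n-\ell-1}$; you additionally make explicit the inclusion $\cR_\ell^{\mathsf{NO}}(n)\subseteq\cR_\ell(n)$ and the appeal to \autoref{th: upper bound} for $M_\ell(S)=|\cR_\ell(n)|$, which the paper leaves implicit.
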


\begin{IEEEproof}
By \autoref{th: repeat-free overlap models}, the first two quantities are equal to $|\cR_\ell^{\mathsf{NO}}(n)|$, while \autoref{th: code for l-repeat-free} gives $M_\ell(S)=|\cR_\ell(n)|$. Since
\[
1^{\ell+1}0^{n-\ell-1}
\in
\cR_\ell(n)\setminus\cR_\ell^{\mathsf{NO}}(n),
\]
the inequality is strict.
\end{IEEEproof}

The preceding result settles both restricted models when the reference sequence is $\ell$-repeat-free. For a general reference sequence, a label may have several occurrences, and the two models need not coincide. We next give a sufficient condition for an $S$-non-overlapping code to be $S$-uniquely-decodable.

\begin{lemma}
    Let $\cC$ be an $\ell$-labeling code. For any $S\in\Sigma_q^*$, if $\cC$ is an $S$-non-overlapping  code, then $\cC$ is $S$-uniquely-decodable.
\end{lemma}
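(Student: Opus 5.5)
The plan is to show that for an $S$-non-overlapping $\ell$-labeling codeset $\Lambda$, the codeset can be \emph{read back} from $\Lambda(S)$ and $S$ alone. Then two distinct codesets cannot share a labeling. One caveat has to be fixed first. An $S$-redundant label (Definition~\ref{def: redundant label}) contributes nothing to $\Lambda(S)$. For example, $\varnothing$ and $\{\lambda\}$ with $\lambda\notin W_\ell(S)$ both give the all-zero word and are both trivially $S$-non-overlapping. So the statement only holds if codesets contain no $S$-redundant labels, or equivalently if codesets are identified up to removing redundant labels. I will adopt this convention and state it explicitly at the start of the proof.

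The first step is structural and follows the argument in the proof of \autoref{th: repeat-free overlap models}. Every occurrence $o\in\cO(\Lambda,S)$ has $|\mathsf{supp}(o)|=\ell$, and the supports are pairwise disjoint. Every position $j$ with $\Lambda(S)_j=1$ lies in the support of some occurrence. Consider a maximal run of \emph{ones} in $\Lambda(S)$ occupying positions $a,\ldots,b$. Each support meeting this run is an interval of ones, so it lies entirely inside the run. Hence the run is partitioned into disjoint length-$\ell$ intervals.

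The second step is the key observation: such a partition is forced. By induction from the left, the support containing position $a$ must be exactly $\{a,\ldots,a+\ell-1\}$, because it cannot start before $a$. The next support must then start at $a+\ell$, and so on. Therefore the set of occurrence start positions is exactly
\[
    \{a+m\ell : 0\le m<(b-a+1)/\ell\}
\]
over all runs $[a,b]$ of $\Lambda(S)$, and this set is determined by $\Lambda(S)$ alone.

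The third step recovers the codeset. Knowing $S$ and these start positions $i$, we obtain $\{S_{[i;\ell]}\}$, which is precisely the set of labels of $\Lambda$ that occur in $S$. Under the non-redundancy convention, this set is all of $\Lambda$. Hence $\Lambda$ is a function of $\Lambda(S)$, so $\Lambda_1(S)=\Lambda_2(S)$ implies $\Lambda_1=\Lambda_2$, which is unique decodability.

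The main obstacle is the second step. The subtle point is that disjointness must hold among \emph{all} occurrences, including repeated occurrences of the same label, as the Remark after Definition~\ref{def: partially-non-overlapping} emphasizes. This is exactly what rules out an occurrence straddling a run boundary or starting at an unaligned position inside a run. The other point needing care is the treatment of redundant labels described above.
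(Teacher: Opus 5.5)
Your proof is correct and follows the paper's argument: runs of \emph{ones} in $\Lambda(S)$ have length divisible by $\ell$, their partition into consecutive length-$\ell$ blocks is forced, and the labels are read off $S$ at the block starts. Your left-to-right induction makes explicit the uniqueness of that partition, which the paper only asserts. Your caveat about $S$-redundant labels is also well taken: the paper's proof tacitly assumes codesets contain no such labels, and without that convention the statement as written is false, e.g.\ for the code $\{\varnothing,\{\lambda\}\}$ with $\lambda\notin W_\ell(S)$.
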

\begin{IEEEproof}
    Let $S\in\Sigma_q^n$ for some integers $q,n>1$ and let $\cC$ be an $\ell$-labeling $S$-non-overlapping code.   For any $\Lambda\in\cC$, $\Lambda$ is an $S$-non-overlapping codeset. Hence, the sequence $\Lambda(S)$ is a binary sequence of length $n$ in which any run of \emph{ones} is of length~$j\ell$ for some integer $j\ge 1$. Furthermore, for any run of \emph{ones} in $\Lambda(S)$ there is a unique way to partition it to non-overlapping segments of length $\ell$. Thus, for any run of $j\ell$ \emph{ones} in $\Lambda(S)$, $j\ge 1$, we have a unique set of non-overlapping labels of length $\ell$ that can create this run. That is, $\Lambda$ can be uniquely determined given $S$ and $\Lambda(S)$, which completes the proof. 
\end{IEEEproof}

We now turn to the construction of labeling codes for reference sequences that are not necessarily $\ell$-repeat-free. This setting is unavoidable when
\[
    \ell<\log_q(n-\ell+1),
\]
since no $\ell$-repeat-free sequence of length $n$ exists over $\Sigma_q$. Our construction uses a non-overlapping code $\cN_\ell\subseteq\Sigma_q^\ell$ in the sense of \cite{levy2018mutually,levenshtein1970maximum, yazdi2018mutually,blackburn2015non}; that is, no nonempty proper prefix of a label in $\cN_\ell$ is a suffix of a label in $\cN_\ell$, including the same label. Consequently, distinct occurrences of labels from $\cN_\ell$ cannot overlap in any reference sequence.

\begin{construction} \label{cons: not rf}
Given a sequence $S\in\Sigma_q^n$ and a positive integer $\ell\le n$, construct an $\ell$-labeling code $\cC$ as follows.
\begin{enumerate}
    \item 
    Let $\cN_\ell\subseteq\Sigma_q^\ell$ be a non-overlapping code.
    \item Let $\cN_{(\ell,S)}\triangleq \cN_\ell\cap W_\ell(S)$.
    \item  Let $\cC\triangleq \cP(\cN_{(\ell,S)})$, where $ \cP(\cN_{(\ell,S)})$ is the power set of $\cN_{(\ell,S)}$.
\end{enumerate}
    \hfill\quad $\blacksquare$
\end{construction}

It was proved in~\cite{blackburn2015non} that there exists a non-overlapping code $\cN_\ell$ such that $|\cN_\ell|\ge c_1q^\ell/\ell$ for some absolute constant $c_1$. This implies the following lower bounds for all three labeling models, which can be readily verified.

\begin{claim}\label{cla: non overlapping}
For any $S\in\Sigma_q^n$,
\[
M_\ell(S)
\ge
M_\ell^{\mathsf{PNO}}(S)
\ge
M_\ell^{\mathsf{NO}}(S)
\ge
2^{|\cN_{(\ell,S)}|}.
\]
In particular, if $n\ge q^\ell+\ell-1$, then
\[
M_\ell(n,q)
\ge
M_\ell^{\mathsf{PNO}}(n,q)
\ge
M_\ell^{\mathsf{NO}}(n,q)
\ge
2^{c_1q^\ell/\ell}
\]
for some absolute constant $c_1$.
\end{claim}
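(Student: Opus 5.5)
The first two inequalities in each chain are immediate from \autoref{cor: overlapping relations}, specialized to $\cV=\Sigma_q^\ell$. So the content of the claim lies in the last inequality, $M_\ell^{\mathsf{NO}}(S)\ge 2^{|\cN_{(\ell,S)}|}$. We will get it by showing that the code $\cC=\cP(\cN_{(\ell,S)})$ of \autoref{cons: not rf} is an $\ell$-labeling code that is both $S$-non-overlapping and $S$-uniquely-decodable.

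\textbf{Non-overlap.} Fix a codeset $\Lambda\subseteq\cN_{(\ell,S)}$ and take two distinct occurrences $(\lambda,i),(\lambda',j)\in\cO(\Lambda,S)$ with $i\le j$. If $i=j$, then $\lambda=S_{[i;\ell]}=\lambda'$, contradicting distinctness. If $i<j<i+\ell$, then the suffix of $\lambda$ of length $i+\ell-j$ equals the prefix of $\lambda'$ of the same length. This is a nonempty proper prefix/suffix coincidence, which the non-overlapping property of $\cN_\ell$ forbids, even when $\lambda=\lambda'$. Hence $j\ge i+\ell$, the supports are disjoint, and $\cC$ is $S$-non-overlapping.

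\textbf{Unique decodability.} We then invoke the preceding lemma: an $S$-non-overlapping $\ell$-labeling code is $S$-uniquely-decodable. As a direct check, every label in $\cN_{(\ell,S)}$ lies in $W_\ell(S)$, so it actually occurs in $S$. The labeling $\Lambda(S)$ therefore determines the set of $\ell$-blocks it covers, and the labels present in these blocks are exactly $\Lambda$. Thus distinct subsets give distinct labelings, and we obtain $|\cC|=2^{|\cN_{(\ell,S)}|}$.

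\textbf{The ``in particular'' part.} Let $n\ge q^\ell+\ell-1$. By \autoref{def: de~Bruijn}, take $S$ whose prefix of length $q^\ell+\ell-1$ is an unwrapped de~Bruijn sequence of order $\ell$, padded arbitrarily to length $n$. Then $W_\ell(S)=\Sigma_q^\ell$, so $\cN_{(\ell,S)}=\cN_\ell$. Choosing $\cN_\ell$ as in \cite{blackburn2015non} with $|\cN_\ell|\ge c_1q^\ell/\ell$ gives
\[
M_\ell^{\mathsf{NO}}(n,q)\ge M_\ell^{\mathsf{NO}}(S)\ge 2^{c_1q^\ell/\ell},
\]
and the remaining inequalities follow from \autoref{cor: overlapping relations}.

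The only step that needs care is the overlap argument. One must be sure the non-overlapping code definition covers self-overlap, i.e.\ the case $\lambda=\lambda'$. It does, as stated in the paragraph before \autoref{cons: not rf}, and the case $i=j$ is dispatched separately above. Everything else is bookkeeping.
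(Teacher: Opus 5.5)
Your proof is correct and follows the route the paper intends. The paper only says the claim ``can be readily verified'' from \autoref{cons: not rf}, the lemma that $S$-non-overlapping $\ell$-labeling codes are $S$-uniquely-decodable, Blackburn's bound, and a de~Bruijn prefix giving $W_\ell(S)=\Sigma_q^\ell$. That is exactly your argument, and your direct check using $\cN_{(\ell,S)}\subseteq W_\ell(S)$ is worth keeping. The lemma silently needs every label to occur in $S$, since codesets differing only in $S$-redundant labels would otherwise induce the same labeling.
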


To have a specific value (without the unknown constant $c_1$), one can use Levenshtein's construction~\cite{levenshtein1970maximum}, which is a special case of the construction of~\cite{blackburn2015non}. This will result in a lower bound of $\frac{(q-1)^2(2q-1)}{4\ell q^4}q^\ell$.

While \autoref{cons: not rf} gives lower bounds for all three models, the bound on the unrestricted quantity $M_\ell(n,q)$ can be improved when the reference sequence may be selected, as shown in the next theorem.

\begin{theorem}\label{th: de Bruijn prefix lower bound}
Let $n,\ell$ be positive integers and let $q>1$. If
$\ell<\log_q(n-\ell+1)$,
then
\[
M_\ell(n,q)
\ge
|\cR_\ell(q^\ell+\ell-1)|
=
2^{\Theta\left(
\frac{\log\ell}{\ell}\,q^\ell
\right)},
\]
where the asymptotic equality holds for fixed $q$ as
$\ell\to\infty$.
\end{theorem}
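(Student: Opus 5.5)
The plan is to take $S$ to be a de~Bruijn \emph{prefix}. Let $m\triangleq q^\ell+\ell-1$ and let $D\in\Sigma_q^m$ be the acyclic unwrapping of a de~Bruijn sequence of order $\ell$ from \autoref{def: de~Bruijn}, so $D$ is $\ell$-repeat-free. The hypothesis $\ell<\log_q(n-\ell+1)$ is equivalent to $n>m$. Define the reference sequence $S\triangleq D\,\sigma^{n-m}$ for some symbol $\sigma$; any padding works.

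The natural choice is to take the code $\cC_{(m,\ell)}$ of \autoref{const: meet uppper bound} built for $D$, which has $|\cR_\ell(m)|$ codesets, and use it on $S$. The obstacle is that a label $S_{[i;\ell]}$ with $i\le q^\ell$ may occur again inside the suffix of $S$, so $\Lambda_X(S)$ need not equal $X0^{n-m}$. I would handle this by comparing prefixes. Every label in $\Lambda_X$ is a length-$\ell$ substring of $D$, and $D$ contains each length-$\ell$ word exactly once among its first $m$ positions. Hence, restricted to windows starting at positions $1,\ldots,q^\ell$, each label of $\Lambda_X$ occurs only at its chosen position. Such windows lie entirely inside $D$, and every position $j\le m$ covered by an occurrence starting at $i\le q^\ell$ is accounted for. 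An occurrence starting at $i>q^\ell$ covers only positions greater than $q^\ell$, but it could still touch positions in $(q^\ell,m]$.

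To avoid that, I would compare only the first $q^\ell$ coordinates. Let $\phi$ be the map sending $\Lambda$ to $\Lambda(S)_{[1;q^\ell]}$. Occurrences starting after $q^\ell$ do not affect these coordinates, so $\phi(\Lambda_X)$ equals the first $q^\ell$ bits of $X$. Distinct $X$ can share a length-$q^\ell$ prefix, so the code should instead be indexed by $X\in\cR_\ell(m)$ read through the full first $m$ coordinates.

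A cleaner alternative I would actually pursue is to drop the unwrapping and argue directly: $\ell$-repeat-freeness of $D$ means that within $D$ each label occurs only at the intended place. In $S$, extra occurrences start at positions $i>q^\ell$. I would choose the padding so that no window starting in $(q^\ell,n-\ell+1]$ with $i\le m$, that is, a window straddling the boundary, or lying in the padding, creates a new occurrence of a label drawn from $D$ that could alter $\Lambda_X(S)$ in a way that merges two codesets. Rather than control this, it is enough to show that the map $X\mapsto\Lambda_X(S)$ is injective. Given $\Lambda_X(S)$, one recovers $\Lambda_X$ by noting that each label's occurrence at its first position, which is its unique position in $D$, is present. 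Concretely, $\Lambda_X(S)_{[1;m]}\supseteq X$ bitwise, with extra ones only beyond position $q^\ell$.

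Injectivity then reduces to a counting/monotonicity argument, and this is the step I expect to be hardest. I would first try choosing $\sigma$ and the de~Bruijn sequence, for instance one ending in $\sigma^{\ell-1}$ with $S$ padded by $\sigma$, so that every window starting after $q^\ell$ equals $\sigma^\ell$, already in $D$ at a unique early position. Then the only possible extra occurrences come from the single label $\sigma^\ell$. Whenever $\sigma^\ell\in\Lambda_X$, the padding tail is all ones, and otherwise it is all zeros. Thus $\Lambda_X(S)=X$ followed by a constant block determined by $X$, which gives injectivity and the bound $M_\ell(n,q)\ge|\cR_\ell(m)|$. Finally, the asymptotic $2^{\Theta(q^\ell\log\ell/\ell)}$ follows from the estimate $|\cR_\ell(N)|=2^{\Theta((N/\ell)\log\ell)}$ proved inside \autoref{th: logarithmic label length}, applied with $N=m=\Theta(q^\ell)$. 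That argument used only $\ell\to\infty$ with $N/\ell\to\infty$, and I would recheck that its proof did not rely on $\ell=\lceil c\log_q N\rceil$ beyond this.
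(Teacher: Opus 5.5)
You are following the paper's route: an unwrapped de~Bruijn sequence $D$ padded with $\sigma^{n-m}$, the code of \autoref{const: meet uppper bound} built on $D$, and the asymptotics borrowed from the proof of \autoref{th: logarithmic label length}. However, the step you flagged as hardest is not closed by the condition you impose. Requiring only that $D$ end in $\sigma^{\ell-1}$ does make every window starting after $q^\ell$ equal to $\sigma^\ell$. But the first such window starts at $q^\ell+1$ and covers positions $q^\ell+1,\ldots,m$ inside $D$. If the unique occurrence of $\sigma^\ell$ in $D$ is at an ``early'' position, as you describe, then whenever $\sigma^\ell\in\Lambda_X$ these last $\ell-1$ coordinates are forced to one regardless of $X$. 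So the claim that $\Lambda_X(S)$ is $X$ followed by a constant block is false, and injectivity can fail. Concretely, take $q=\ell=2$, $D=00110$ (the unwrapping of $0011$, which ends in $0$), and $S=001100$. The words $X=11110$ and $Y=11111$ of $\cR_2(5)$ give $\Lambda_X=\{00,11\}$ and $\Lambda_Y=\{00,11,10\}$, and both label $S$ as $111111$.

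The missing ingredient is to rotate the de~Bruijn sequence so that $D$ ends in $\sigma^\ell$, i.e., the unique occurrence of $\sigma^\ell$ in $D$ sits at position $q^\ell$. Then $\sigma^\ell\in\Lambda_X$ if and only if the construction selects the window at $q^\ell$, which happens exactly when a run of ones in $X$ ends at position $m$. In that case $X$ already has ones on positions $q^\ell,\ldots,m$, so the extra occurrences of $\sigma^\ell$ in the padding only add ones beyond position $m$. All other windows starting after $q^\ell$ are $\sigma^\ell$, and windows starting at or before $q^\ell$ lie in the $\ell$-repeat-free $D$. Hence $\Lambda_X(S)_{[1;m]}=X$, which gives injectivity; this is exactly the paper's argument. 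Your reduction of the asymptotic estimate to the proof of \autoref{th: logarithmic label length} with $N=m$ is fine, since that proof only needs $\ell\to\infty$ and $N/\ell\to\infty$.
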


\begin{IEEEproof}
Let
\[
    N=q^\ell+\ell-1.
\]
Choose a symbol $\sigma\in\Sigma_q$ and an unwrapped de~Bruijn sequence $S'\in\Sigma_q^N$ of order $\ell$ whose final $\ell$ symbols are $\sigma^\ell$. Such a sequence can be obtained (see~\autoref{def: de~Bruijn}).

Since $\ell<\log_q(n-\ell+1)$, we have $N<n$. Define
\[
    S=S'\sigma^{n-N}.
\]
Apply \autoref{const: meet uppper bound} to $S'$ and let $\cC_{(N,\ell)}$ be the resulting code. For every
$X\in\cR_\ell(N)$, we have
\[
    \Lambda_X(S')=X.
\]

Every length-$\ell$ substring of $S$ that crosses the end of $S'$ or is contained in its appended suffix is equal to $\sigma^\ell$. Moreover, the occurrence of $\sigma^\ell$ in $S'$ occupies its final $\ell$ positions. Therefore, any additional occurrences of this label do not change the first $N$ entries of the labeling, and
\[
    \Lambda_X(S)_{[1;N]}=X.
\]
Consequently, distinct sequences in $\cR_\ell(N)$ induce distinct labelings of $S$. Hence,
\[
    M_\ell(n,q)
    \ge
    |\cC_{(N,\ell)}|
    =
    |\cR_\ell(N)|.
\]
Since $N=\Theta(q^\ell)$ for fixed $q$, we have $\ell=\Theta(\log N)$. The estimate established in the proof of
\autoref{th: logarithmic label length} therefore gives
\[
|\cR_\ell(N)|
=
2^{\Theta\left((N/\ell)\log\ell\right)}
=
2^{\Theta\left((q^\ell/\ell)\log\ell\right)}.
\]
\end{IEEEproof}

\section{Conclusions}\label{sec:conclusion}
In this work, we introduced a framework for representing information by labeling a known DNA reference sequence. Rather than synthesizing a new molecule for every message, the information is represented by selecting a codeset whose labels induce a distinct binary pattern on the reference sequence. We formulated the corresponding maximum-code problems for a given reference sequence and for an optimized reference sequence, and extended them to restricted collections of executable labels. We also introduced the non-overlapping and partially-non-overlapping models, which account for physical restrictions on simultaneously labeling overlapping sites.

We obtained exact results in several settings. For periodic reference sequences, we determined the maximum code size for periods one and two and derived a tight bound when the least period is no larger than the alphabet size, with equality when the symbols within one period are distinct. For fixed-length labels, we showed that every possible labeling belongs to the constrained system $\cR_\ell$, analyzed the cardinality and capacity of this system, and characterized precisely when the resulting upper bound is attained. In particular, an $\ell$-repeat-free reference sequence is optimal, and our construction realizes every sequence in $\cR_\ell(n)$. When $\ell=\Theta(\log n)$, this gives codes of size
\[
2^{\Theta\left(
        \frac{\log\log n}{\log n}\,n
    \right)}.
\]
We also considered the regime in which an $\ell$-repeat-free reference sequence does not exist and provided lower bounds based on non-overlapping codes and on reference sequences containing a de~Bruijn prefix. Finally, for $\ell$-repeat-free reference sequences, we determined the maximum sizes under the non-overlapping and partially-non-overlapping restrictions and showed that these two quantities coincide, while they are strictly smaller than the unrestricted maximum when $\ell>1$ and $n>\ell$.

Several fundamental questions remain open. First, it would be desirable to obtain structural conditions characterizing when a labeling code is $S$-uniquely-decodable for a general reference sequence $S$. The non-overlapping condition gives a useful sufficient condition, but it is not necessary. A related inverse problem is to determine whether, for a given labeling code $\cC$, there exists a reference sequence for which $\cC$ is uniquely decodable, and, if so, to find the shortest such sequence. For a restricted set of executable labels $\cV$, it remains to characterize the subsets of $\cP(\cV)$ that form good labeling codes, determine the corresponding maximum code sizes, and design reference sequences attaining them. Such results would make it possible to incorporate constraints imposed by particular labeling techniques, including length, sequence-composition, and PAM requirements.

Another central open problem concerns fixed-length labeling when the reference sequence is not $\ell$-repeat-free. In this regime, the general upper bound $|\cR_\ell(n)|$ cannot be attained, while the constructions presented here provide only lower bounds. Closing this gap, characterizing the optimal reference sequences, and determining whether constructions that allow overlaps can improve upon the non-overlapping construction are natural next steps. It would also be useful to obtain explicit finite-length bounds and constants, particularly for the DNA alphabet. Similarly, the periodic analysis can be extended to periods larger than the alphabet size, where symbols must repeat within a period.

The non-overlapping and partially-non-overlapping models also leave several extremal questions. For general reference sequences, tight bounds on $M_\ell^{\mathsf{NO}}(S)$ and $M_\ell^{\mathsf{PNO}}(S)$, and on their maxima over the reference sequence, are not yet known. It remains to characterize when the inequalities
\[
M_\ell(S)
\ge
M_\ell^{\mathsf{PNO}}(S)
\ge
M_\ell^{\mathsf{NO}}(S)
\]
are equalities, and when either inequality is strict. The partially-non-overlapping model is especially interesting when labels have repeated occurrences, since an isolated occurrence may identify a label even when its other occurrences overlap. Studying these models for periodic reference sequences and developing a more detailed model for competition between labels at overlapping sites are further
directions of interest.

Finally, the present work considers an ideal noiseless labeling process. A practical system must account for absent labels, off-target labeling at similar sequences, and errors in locating the labeled positions. These phenomena motivate sequence-dependent error models and labeling codes capable of correcting the resulting errors. They also raise a new reference-design question: although repeat-free sequences are optimal in the noiseless fixed-length setting, repeated occurrences of a label may provide useful redundancy when some occurrences are missing. On the other hand, repeated or similar substrings may increase the possibility of off-target labeling.
Understanding this tradeoff and determining whether repeat-free reference sequences remain optimal in noisy settings constitute promising directions for future work. The research on error correction is currently under consideration.

\balance
\bibliographystyle{IEEEtran}
\bibliography{refs}

\begin{thebibliography}{10}
\providecommand{\url}[1]{#1}
\csname url@samestyle\endcsname
\providecommand{\newblock}{\relax}
\providecommand{\bibinfo}[2]{#2}
\providecommand{\BIBentrySTDinterwordspacing}{\spaceskip=0pt\relax}
\providecommand{\BIBentryALTinterwordstretchfactor}{4}
\providecommand{\BIBentryALTinterwordspacing}{\spaceskip=\fontdimen2\font plus
\BIBentryALTinterwordstretchfactor\fontdimen3\font minus
  \fontdimen4\font\relax}
\providecommand{\BIBforeignlanguage}[2]{{%
\expandafter\ifx\csname l@#1\endcsname\relax
\typeout{** WARNING: IEEEtran.bst: No hyphenation pattern has been}%
\typeout{** loaded for the language `#1'. Using the pattern for}%
\typeout{** the default language instead.}%
\else
\language=\csname l@#1\endcsname
\fi
#2}}
\providecommand{\BIBdecl}{\relax}
\BIBdecl

\bibitem{barlev24isit}
D.~Bar-Lev, T.~Etzion, E.~Yaakobi, and Z.~Yakhini, ``Representing information
  on {DNA} using patterns induced by enzymatic labeling,'' in \emph{2024 IEEE
  International Symposium on Information Theory (ISIT)}, 2024, pp. 1943--1948.

\bibitem{moter2000fluorescence}
A.~Moter and U.~B. G{\"o}bel, ``Fluorescence in situ hybridization ({FISH}) for
  direct visualization of microorganisms,'' \emph{Journal of microbiological
  methods}, vol.~41, no.~2, pp. 85--112, 2000.

\bibitem{chen2018efficient}
B.~Chen, W.~Zou, H.~Xu, Y.~Liang, and B.~Huang, ``Efficient labeling and
  imaging of protein-coding genes in living cells using {CRISPR}-tag,''
  \emph{Nature communications}, vol.~9, no.~1, p. 5065, 2018.

\bibitem{gruszka2021single}
D.~Gruszka, J.~Jeffet, S.~Margalit, Y.~Michaeli, and Y.~Ebenstein,
  ``Single-molecule optical genome mapping in nanochannels:
  {Multidisciplinarity} at the nanoscale,'' \emph{Essays in Biochemistry},
  vol.~65, no.~1, pp. 51--66, 2021.

\bibitem{ma2015multicolor}
H.~Ma, A.~Naseri, P.~Reyes-Gutierrez, S.~A. Wolfe, S.~Zhang, and T.~Pederson,
  ``Multicolor {CRISPR} labeling of chromosomal loci in human cells,''
  \emph{Proceedings of the National Academy of Sciences}, vol. 112, no.~10, pp.
  3002--3007, 2015.

\bibitem{deen2017methyltransferase}
J.~Deen, C.~Vranken, V.~Leen, R.~K. Neely, K.~P. Janssen, and J.~Hofkens,
  ``Methyltransferase-directed labeling of biomolecules and its applications,''
  \emph{Angewandte Chemie International Edition}, vol.~56, no.~19, pp.
  5182--5200, 2017.

\bibitem{young2020technical}
A.~P. Young, D.~J. Jackson, and R.~C. Wyeth, ``A technical review and guide to
  {RNA} fluorescence in situ hybridization,'' \emph{PeerJ}, vol.~8, p. e8806,
  2020.

\bibitem{assa2024quantifying}
G.~Assa, N.~Kalter, M.~Rosenberg, A.~Beck, O.~Markovich, T.~Gontmakher,
  A.~Hendel, and Z.~Yakhini, ``Quantifying allele-specific {CRISPR} editing
  activity with {CRISPECTOR}2. 0,'' \emph{Nucleic Acids Research}, vol.~52,
  no.~16, pp. e78--e78, 2024.

\bibitem{tabatabaei2020dna}
S.~K. Tabatabaei, B.~Wang, N.~B.~M. Athreya, B.~Enghiad, A.~G. Hernandez, C.~J.
  Fields, J.-P. Leburton, D.~Soloveichik, H.~Zhao, and O.~Milenkovic, ``{DNA}
  punch cards for storing data on native {DNA} sequences via enzymatic
  nicking,'' \emph{Nature communications}, vol.~11, no.~1, p. 1742, 2020.

\bibitem{Sadremomtaz2023storageCRISPR}
A.~Sadremomtaz, R.~F. Glass, J.~E. Guerrero, D.~R. LaJeunesse, E.~A. Josephs,
  and R.~Zadegan, ``{Digital data storage on {DNA} tape using CRISPR base
  editors},'' \emph{Nature Communications}, vol.~14, no.~1, p. 6472, 2023.

\bibitem{jinek2012programmable}
M.~Jinek, K.~Chylinski, I.~Fonfara, M.~Hauer, J.~A. Doudna, and E.~Charpentier,
  ``A programmable dual-{RNA}--guided {DNA} endonuclease in adaptive bacterial
  immunity,'' \emph{science}, vol. 337, no. 6096, pp. 816--821, 2012.

\bibitem{cong2013multiplex}
L.~Cong, F.~A. Ran, D.~Cox, S.~Lin, R.~Barretto, N.~Habib, P.~D. Hsu, X.~Wu,
  W.~Jiang, L.~A. Marraffini \emph{et~al.}, ``Multiplex genome engineering
  using {CRISPR/Cas} systems,'' \emph{Science}, vol. 339, no. 6121, pp. 819--823,
  2013.

\bibitem{amit2021crispector}
I.~Amit, O.~Iancu, A.~Levy-Jurgenson, G.~Kurgan, M.~S. McNeill, G.~R. Rettig,
  D.~Allen, D.~Breier, N.~Ben~Haim, Y.~Wang \emph{et~al.}, ``{CRISPECTOR}
  provides accurate estimation of genome editing translocation and off-target
  activity from comparative {NGS} data,'' \emph{Nature Communications}, vol.~12,
  no.~1, p. 3042, 2021.

\bibitem{hanania2023capacity}
D.~Hanania, D.~Bar-Lev, Y.~Nogin, Y.~Shechtman, and E.~Yaakobi, ``On the
  capacity of {DNA} labeling,''  \emph{IEEE Transactions on Information Theory}, vol.~71, no.~5, pp. 3457--3472, 2025.
  
  \bibitem{hofmeister2026achieving}
C.~Hofmeister, A.~Gruica, D.~Hanania, R.~Bitar, and E.~Yaakobi,
``Achieving DNA labeling capacity with minimum labels through extremal
de Bruijn subgraphs,''
\emph{IEEE Trans. Inf. Theory}, vol.~72, no.~2, pp.~1122--1132,
Feb.~2026, doi: 10.1109/TIT.2025.3640866.

\bibitem{wu2026capacity}
Z.~Wu, Q.~Cao, L.~Liu, and B.~Bai,
``On the capacity of DNA labeling in the single-label setting,''
\emph{arXiv preprint arXiv:2609.36715}, 2026.

\bibitem{hanania2026composite}
D.~Hanania, T.~T.~Nguyen, K.~Cai, E.~Yaakobi, and Y.~M.~Chee,
``DNA labeling with composite symbols,''
in \emph{Proc. 2026 IEEE Int. Symp. Inf. Theory (ISIT)},
Guangzhou, China, 2026, pp.~1--6,
doi: 10.1109/ISIT62367.2026.11653892.

\bibitem{hanania2025error}
D.~Hanania and E.~Yaakobi,
``Error-correcting codes for labeled DNA sequences,''
in \emph{Proc. 2025 IEEE Int. Symp. Inf. Theory (ISIT)},
Ann Arbor, MI, USA, Jun.~2025, pp.~1--6,
doi: 10.1109/ISIT63088.2025.11195710.

\bibitem{liu2026burst}
W.~Liu, Z.~Jiang, Z.~Huang, and H.~Hou,
``Asymptotically optimal codes for correcting burst deletions and insertions
in labeled DNA sequences,''
\emph{arXiv preprint arXiv:2606.14573}, 2026.


\bibitem{Nogin2023OM}
Y.~Nogin, D.~Bar-Lev, D.~Hanania, T.~Detinis~Zur, Y.~Ebenstein, E.~Yaakobi,
  N.~Weinberger, and Y.~Shechtman, ``{Design of optimal labeling patterns for
  optical genome mapping via information theory},'' \emph{Bioinformatics},
  vol.~39, no.~10, p. btad601, 09 2023.

\bibitem{levy2013beyond}
M.~Levy-Sakin and Y.~Ebenstein, ``Beyond sequencing: {Optical} mapping of {DNA}
  in the age of nanotechnology and nanoscopy,'' \emph{Current opinion in
  biotechnology}, vol.~24, no.~4, pp. 690--698, 2013.

\bibitem{muller2017optical}
V.~M{\"u}ller and F.~Westerlund, ``Optical {DNA} mapping in nanofluidic
  devices: {Principles} and applications,'' \emph{Lab on a Chip}, vol.~17,
  no.~4, pp. 579--590, 2017.

\bibitem{zhang2024parallel}
C.~Zhang, R.~Wu, F.~Sun, Y.~Lin, Y.~Liang, J.~Teng, N.~Liu, Q.~Ouyang, L.~Qian, and H.~Yan,
``Parallel molecular data storage by printing epigenetic bits on {DNA},''
\emph{Nature}, vol.~634, pp.~824--832, 2024.

\bibitem{marcus2001introduction}
\BIBentryALTinterwordspacing
B.~H. Marcus, R.~M. Roth, and P.~H. Siegel, ``An introduction to coding for
  constrained systems,'' San Diego, CA, USA:Lecture Notes, 2001. [Online].
  Available: \url{http://www.math.ubc.ca/~marcus/Handbook/}
\BIBentrySTDinterwordspacing

\bibitem{elishco2021repeat}
O.~Elishco, R.~Gabrys, E.~Yaakobi, and M.~M{\'e}dard, ``Repeat-free codes,''
  \emph{IEEE Transactions on Information Theory}, vol.~67, no.~9, pp.
  5749--5764, 2021.

\bibitem{van1987circuits}
T.~van Aardenne-Ehrenfest and N.~G. de~Bruijn, ``Circuits and trees in oriented
  linear graphs,'' in \emph{Classic papers in combinatorics}.\hskip 1em plus
  0.5em minus 0.4em\relax Springer, 1987, pp. 149--163.

\bibitem{etzion2024sequences}
T.~Etzion, \emph{Sequences and the de Bruijn Graph}.\hskip 1em plus 0.5em minus
  0.4em\relax Elsevier, 2024.

\bibitem{adler1983algorithms}
R.~Adler, D.~Coppersmith, and M.~Hassner, ``Algorithms for sliding block
  codes-an application of symbolic dynamics to information theory,'' \emph{IEEE
  Transactions on Information Theory}, vol.~29, no.~1, pp. 5--22, 1983.

\bibitem{makri2007shortest}
F.~S.~Makri, A.~N.~Philippou, and Z.~M.~Psillakis,
``Shortest and longest length of success runs in binary sequences,''
\emph{Journal of Statistical Planning and Inference}, vol.~137, no.~7, pp.~2226--2239, 2007,
doi: 10.1016/j.jspi.2006.07.007.

\bibitem{balado2026systematic}
F.~Balado and G.~C.~M.~Silvestre,
``Systematic enumeration of fundamental quantities involving runs in binary strings,''
arXiv:2602.10005, 2026,
doi: 10.48550/arXiv.2602.10005.

\bibitem{levy2018mutually}
M.~Levy and E.~Yaakobi, ``Mutually uncorrelated codes for {DNA} storage,''
  \emph{IEEE Transactions on Information Theory}, vol.~65, no.~6, pp.
  3671--3691, 2019.

\bibitem{levenshtein1970maximum}
V.~I. Levenshtein, ``Maximum number of words in codes without overlaps,''
  \emph{Problemy Peredachi Informatsii}, vol.~6, no.~4, pp. 88--90, 1970.

\bibitem{yazdi2018mutually}
S.~T. Yazdi, H.~M. Kiah, R.~Gabrys, and O.~Milenkovic, ``Mutually uncorrelated
  primers for {DNA}-based data storage,'' \emph{IEEE Transactions on
  Information Theory}, vol.~64, no.~9, pp. 6283--6296, 2018.

\bibitem{blackburn2015non}
S.~R. Blackburn, ``Non-overlapping codes,'' \emph{IEEE Transactions on
  Information Theory}, vol.~61, no.~9, pp. 4890--4894, 2015.

\end{thebibliography}
\clearpage

\appendix

\noindent \textit{Proof of~\autoref{lem: period bound}}:
If the same symbol occurs in two positions of the period, replace one of these occurrences, in every period, by a new symbol. For each codeset, and for any label that contains $t$ copies of the repeated symbol $\sigma$, we replace this label with all $2^t$ labels containing all combinations of replacing $\sigma$ with the new symbol. Hence, splitting a repeated symbol cannot decrease $M(S)$. Repeating this argument, and using $\pi\le q$, it is sufficient to assume that $S_{[1;\pi]}=\sigma_1\sigma_2\cdots\sigma_\pi$, where $\sigma_1,\ldots,\sigma_\pi$ are distinct. Note that this process can introduce redundant labels; however, this cannot reduce the size of the code.

To find $M(S)$, we consider several distinct cases. If $n=\pi$ then all the symbols of $S=\sigma_1\ldots\sigma_\pi$ are distinct and any generated codeset ${\Lambda \in \cP(\{\sigma_1,\ldots,\sigma_\pi\})}$ yields a different labeling of $S$, i.e.,  $M(S)\ge |\cP(\{\sigma_1,\ldots,\sigma_\pi\})| = 2^\pi$. Additionally, since $\Lambda(S)$ is a binary sequence of length $|S|=\pi$, we have that $M(S)\le 2^\pi$. Hence, for $n=\pi$ we have that $M(S)=2^\pi$.

        If $n=2\pi$ we  consider two types of codesets. 
        \begin{itemize}
            \item If $\Lambda$ is a codeset such that $\Lambda(S)_{[1;\pi]} = \Lambda(S)_{[\pi+1;\pi]}$ then similarly to the case where $n=\pi$, $\Lambda(S)$ is determined by its first $\pi$ bits which implies that there are at most $2^\pi$ such codesets. Moreover, for any distinct ${\Lambda_1,\Lambda_2 \in \cP(\{\sigma_1,\ldots,\sigma_\pi\})}$ we have that $\Lambda_1\not\equiv_S\Lambda_2$ and $\Lambda_i(S)_{[1;\pi]} = \Lambda_i(S)_{[\pi+1;\pi]}$, ${i=1,2}$.  That is, we have exactly $2^\pi$ such  codesets that can be used.
            \item Otherwise, it can be verified that  $\Lambda(S)_{[1;\pi]} \ne \Lambda(S)_{[\pi+1;\pi]}$ cannot hold if all the labels $\lambda\in\Lambda$ are substring of $\Lambda(S)_{[1;\pi]}$. That is, we may only consider codesets $\Lambda$ such that there exists $\lambda\in\Lambda$ which is a substring of $S$ but not a substring of $\Lambda(S)_{[1;\pi]}$.
            Such a label $\lambda$ must start at some symbol $\sigma_i$ that belongs to the first $\pi$ symbols of $S$ and end in a symbol $\sigma_j$ that belongs to the last $\pi$ symbols of~$S$. Next, we consider all different cases depending on $i$ and $j$.
            \begin{enumerate}
                \item[1.] If $i=1$ and $j=\pi$ then $\lambda(S)=\Lambda(S)$ is the all-ones sequence which does not satisfy $\Lambda(S)_{[1;\pi]} \ne \Lambda(S)_{[\pi+1;\pi]}$.
                \item[2.] If $i=1$ and $j<\pi$ then $\lambda(S)=1^{\pi+j}0^{\pi-j}$. 
                We first note that if $\Lambda$ contains additional labels with the symbol $\sigma_{j+1}$ (that are substring of $S$), then $\Lambda(S)$ will be a sequence that starts with a prefix of more than $\pi+j$ ones -- a case that is  accounted for when considering a larger value of $j$. Hence we assume $\Lambda$ contains $\lambda$ and any other label in $\Lambda$ may only be a substring of the last $\pi-j-1$ symbols of $S$. 
                This contributes $2^{\pi-j-1}$ additional codesets for each $1\le j< \pi$ and 
                $$\sum_{j=1}^{\pi-1}2^{\pi-j-1} = 2^{\pi-1}\cdot \frac{2^\pi-2}{2^\pi} = 2^{\pi-1}-1$$
                additional codesets in total. 
                \item[3.] If $i>1$ and $j=\pi$ then $\lambda(S) = 0^{i-1}1^{2\pi-i+1}$. Similarly to the previous case, we will restrict ourselves to codesets in which the length of the prefix of ones does not change, which results in $2^{i-2}$ codesets for each $1<i\le \pi$ and  $\sum_{i=2}^\pi 2^{i-2} = 2^{\pi-1}-1$ additional codesets in total. 
                \item[4.] If $i>1$ and $j<\pi$ then ${\lambda(S)= 0^{i-1}1^{\pi-i+1+j}0^{\pi-j}}$ and similarly to the previous two cases, we will keep the length of the middle substring of ones fixed, i.e., we avoid using the symbols $\sigma_{i-1}$ and $\sigma_{j+1}$ in our additional labels. To this end, we need to consider the following disjoint cases.
                \begin{enumerate}
                    \item[(a)] If $i<  j+2$ then the first $i-2$ symbols and the last $\pi-j-1$ symbols are different, i.e., \begin{align}\label{eq: symbol intersection}
                        \{\sigma_1,\ldots,\sigma_{i-2}\}\cap\{\sigma_{j+2},\ldots, \sigma_\pi\} = \varnothing.
                    \end{align}
                    Moreover, $\sigma_{i-1}$ and $\sigma_{j+1}$ are not part of these two sets of symbols. 
                    Hence, similarly to Case~2 all possible $2^{\pi-j-1}$ binary suffixes of length $\pi-j-1$ are possible and since the symbols at the beginning of $S$ are disjoint to the ones we used, for each such suffix we can create all $2^{i-2}$ different length-($i-1$) prefixes. That is, there are $2^{\pi-j-1}\cdot2^{i-2} = 2^{\pi+i-j-3}$ additional codesets for any $1<i<\pi$, $i-2<j<\pi$ and hence
                    \begin{align*}
                        \sum_{i=2}^{\pi}\sum_{j=i-1}^{\pi-1}2^{\pi+i-j-3} 
                        & = \sum_{i=2}^{\pi}(2^{\pi-1}-2^{i-2})\\
                        & = 2^{\pi-1}(\pi-1)+1-2^{\pi-1}\\
                        & = 2^{\pi-1}(\pi-2)+1
                    \end{align*}
                    additional codesets in total. 
                    \item[(b)]  If $i=j+2$  then all arguments from Case 4(a) still hold, however, since ${\sigma_{i-1}=\sigma_{j+1}}$ it can be observed that for $$\lambda = \sigma_i\cdots\sigma_{\pi}\sigma_1\cdots\sigma_j,$$ the codeset $$\Lambda = \{\lambda\}\cup \{\sigma_1,\ldots,\sigma_{i-2}, \sigma_{i},\sigma_{i+1},\ldots ,\sigma_\pi \}$$ yields a labeling in which all symbols of $S$ are labeled except for the symbol $\sigma_{i-1}$, in both of its occurrences in $S$. That is, $$\Lambda(S)_{[1;\pi]}=\Lambda(S)_{[\pi+1;\pi]} = 1^{i-2}01^{\pi-i+1}.$$ As this labeling was already considered in the symmetric case (using a different, $S$-equivalent codeset), for any $3\le i\le \pi$ we have  $2^{\pi+i-j-3}-1=2^{\pi-1}-1$ new codesets, and ${(\pi-2)(2^{\pi-1}-1)}$ additional codesets in total.
                    \item[(c)] If $i=j+3$ then, as in the previous case, (\ref{eq: symbol intersection}) still holds, however, in this case, we have that  $\sigma_{i-1}\in \{\sigma_{j+2},\ldots, \sigma_{\pi}\}$ and $\sigma_{j+1}\in\{\sigma_1,\ldots,\sigma_{i-2}\}$. Since we are avoiding labels that contain these two symbols, we cannot use them as part of the additional labels. Hence, we obtain $2^{\pi-j-1-1}\cdot 2^{i-2-1} = 2^{\pi+i-j-5}=2^{\pi-2}$  codesets. However, one of these codesets is not new as it satisfies the condition $\Lambda(S)_{[1;\pi]}=\Lambda(S)_{[\pi+1;\pi]}$. Namely, for $$\lambda = \sigma_i \cdots\sigma_\pi\sigma_1\cdots\sigma_j,$$ the codeset $$\Lambda=\{\lambda\}\cup\{\sigma_1, \ldots,\sigma_{i-3}\}\cup\{\sigma_{j+3},\cdots, \sigma_\pi\}$$ satisfies 
                    \begin{align*}
                    \Lambda(S) & = 1^{i-3}001^{|\lambda|}001^{\pi-j-2} \\ & = 1^{i-3}001^{\pi-i+1+j}001^{\pi-j-2}\\ &
                    = 1^{i-3} 0 0 1^{\pi-i+1} 1^{j}00 1^{\pi-j-2}\\
                    & = 1^j 0 0 1^{\pi-j-2} 1^j 0 0 1^{\pi-j-2},
                    \end{align*}
                    where the last equality follows since $i=j+3$. To see that this is the only codeset for which $\Lambda(S)_{[1;\pi]}=\Lambda(S)_{[\pi+1;\pi]}$ note that in any other such codeset, at least one of the symbols in $\{\sigma_1, \ldots,\sigma_{i-3}\}\cup\{\sigma_{j+3},\cdots, \sigma_\pi\}$ is not selected, and hence this symbol is labeled in the occurrence that corresponds to $\lambda$ but not labels in its other occurrence. Thus, the number of new codesets obtain for this case is, $2^{\pi-2}-1$ for any $4\le i\le\pi$ and $(\pi-3)(2^{\pi-2}-1)$ additional codesets in total.
                    \item[(d)] Otherwise, we have that $i> j+3$ which implies that (\ref{eq: symbol intersection}) does not hold. In particular, it holds that \begin{align*}
                        \abs{\{\sigma_1,\ldots,\sigma_{i-2}\}\cap\{\sigma_{j+2},\ldots, \sigma_\pi\}} =  i-j-3.
                    \end{align*}
                    This implies that we cannot choose any binary pattern we want for the prefix and suffix independently, as any symbol in this intersection is either labeled or not, in both halves of $\Lambda(S)$. That is, we can generate $2^{i-j-3}$ different patterns which correspond to the symbols in this intersection. As for the rest of the symbols, we note that in this case $\sigma_{j+1}\in\{\sigma_1,\ldots,\sigma_{i-2}\}$ and as we are avoiding labels that contain it, we left with $2^{i-2 -1- (i-j-3)}=2^{j}$ distinct options for the prefix. Similarly, we have that $\sigma_{i-1}\in \{\sigma_{j+2},\ldots, \sigma_{\pi}\}$ and there are $2^{\pi-j-1-1-(i-j-3)}=2^{\pi-i+1}$ options for the suffix. That is, the total number of codeset for this case is 
                    $2^{i-j-3}\cdot 2^{j}\cdot 2^{\pi-i+1} = 2^{\pi-2}$. However, similar to the previous case, we need to account for the codesets that are $S$ equivalent to codesets we already counted in the symmetric case. Here, to obtain symmetry we must again use all symbols that appear in $\lambda$ so their second occurrence will be labels. As for the other $i-j-3$ symbols in the intersection, each of them is either labeled in both halves of $\Lambda(S)$ or not, i.e., their selection does not affect the symmetry. Thus, the number of new codesets is $2^{\pi-2} - 2^{i-j-3}$ for any $5\le i\le \pi$, $1\le j\le i-4$, and 
                    \begin{align*}
                        &\sum_{i=5}^\pi  \sum_{j=1}^{i-4} \left(2^{\pi-2}  - 2^{i-j-3}\right) \\ & = \sum_{i=5}^\pi \left(2^{\pi-2}(i-4) - 2^{i-3}+2\right)\\
                        &= \frac{2^{\pi-2}(\pi^2\hspace{-.4ex}-\hspace{-.4ex}7\pi\hspace{-.4ex}+\hspace{-.4ex}12)}{2} \hspace{-.4ex} -\hspace{-.4ex} (2^{\pi-2}-4) \hspace{-.4ex}+\hspace{-.4ex} 2(\pi-4)\\
                        &= 2^{\pi-3}(\pi^2-7\pi+12) - 2^{\pi-2}+2\pi-4\\
                        & = 2^{\pi-3}(\pi^2-7\pi+10) +2\pi-4
                    \end{align*}
                \end{enumerate}
            \end{enumerate}
        \end{itemize}
        To summarize, we have $2^\pi$ codesets that correspond to symmetric labeling and
        \begin{align*}
            &\ \ \ \ \ 2(2^{\pi-1}-1) + 2^{\pi-1}(\pi-2)+1+(\pi-2)(2^{\pi-1}-1)\\
            &\ \ \ \ \ + (\pi-3)(2^{\pi-2}-1) + 2^{\pi-3}(\pi^2-7\pi+10) +2\pi-4\\
            &= 2^\pi-2+(\pi-2)(2^\pi-1)+1+2^{\pi-2}(\pi-3)\\
            & \ \ \ \ -(\pi-3)+ 2^{\pi-3}(\pi^2-7\pi+10) +2\pi-4\\
             &= 2^\pi +2^\pi(\pi-2)-(\pi-2)+2^{\pi-2}(\pi-3)\\
            & \ \ \ \ -(\pi-3)+ 2^{\pi-3}(\pi^2-7\pi+10) +2\pi-5\\
            & = 2^\pi + 2^\pi(\pi-2)+2^{\pi-2}(\pi-3)+2^{\pi-3}(\pi^2-7\pi+10)\\
            & = 2^{\pi-3}(8+8\pi-16+2\pi-6+\pi^2-7\pi+10)\\
            & = 2^{\pi-3}(\pi^2+3\pi-4)
        \end{align*}
        codesets that correspond to non-symmetric labeling, which sums up to
        $$
        2^\pi + 2^{\pi-3}(\pi^2+3\pi-4) = 2^{\pi-3}(\pi^2+3\pi+4).
        $$

Assume now that $n=r\pi$ for some integer $r\ge3$. We first note that all the complete intermediate periods have the same labeling. Hence, any labeling of $S$ is determined by the labelings of its first and last periods, together with the labeling of one intermediate period. Furthermore, the labelings of the first and last periods form
one of the pairs counted above for the case $n=2\pi$.

Consider first a nonsymmetric pair in Cases 4(b) -- 4(d), i.e., a pair corresponding to $i\ge j+2$. The label
\[
    \lambda
    =
    \sigma_i\cdots\sigma_\pi\sigma_1\cdots\sigma_j
\]
leaves the positions $j+1,\ldots,i-1$ unlabeled in every intermediate period. On the other hand, the longer label
\[
    \lambda'
    =
    \sigma_i\cdots\sigma_\pi
    \sigma_1\cdots\sigma_\pi
    \sigma_1\cdots\sigma_j
\]
has the same effect on the first and last periods, while its occurrences overlap and label all the intermediate periods. Thus, each pair counted in Cases 4(b) -- 4(d) gives one additional labeling when $n\ge3\pi$.

The number of such nonsymmetric pairs was calculated above and is equal to
\begin{align*}
N_{\mathsf{gap}}
={}&
(\pi-2)(2^{\pi-1}-1)
+(\pi-3)(2^{\pi-2}-1)\\
&\quad
+2^{\pi-3}(\pi^2-7\pi+10)+2\pi-4\\
={}&
2^{\pi-3}(\pi^2-\pi-4)+1.
\end{align*}

We also need to consider the symmetric pairs that were included among the first $2^\pi$ labelings. Let $X\in\{0,1\}^\pi$ be the common labeling of the first and last periods. If the first and last entries of $X$ are both one, then $X$ can be obtained using a label that crosses a period boundary. Consequently, we can obtain both
\[
    X^r
    \qquad\text{and}\qquad
    X(1^\pi)^{r-2}X.
\]
These two labelings are distinct unless $X=1^\pi$. Therefore, the symmetric pairs contribute
\[
    2^{\pi-2}-1
\]
additional labelings.

No other labeling of the intermediate periods is possible. Indeed, if all the labels crossing a period boundary have length at most $\pi$, then the same gap pattern is repeated in every intermediate period. If at least one such label has length greater than $\pi$, then its consecutive occurrences overlap, and all the intermediate periods are labeled entirely.

It follows that, for $n\ge3\pi$, the number of additional labelings
compared with the case $n=2\pi$ is
\begin{align*}
N_{\mathsf{extra}}
&=
N_{\mathsf{gap}}+2^{\pi-2}-1\\
&=
2^{\pi-3}(\pi^2-\pi-2).
\end{align*}
Therefore,
\begin{align*}
M(S)
&=
2^{\pi-3}(\pi^2+3\pi+4)
+
2^{\pi-3}(\pi^2-\pi-2)\\
&=
2^{\pi-2}(\pi^2+\pi+1),
\end{align*}
which completes the proof.
    \hfill\quad $\blacksquare$

\noindent \textit{Proof of~\autoref{lem: eta}:}
For convenience, let $A\triangleq \cR_\ell(n)$; 
     we start by partitioning $A$ into two disjoint sets,
    \begin{align*}
        A^0& =\left\{(x_1,\ldots,x_n)\in A:\ x_1=0\right\}\\
        A^1& =\left\{(x_1,\ldots,x_n)\in A:\ x_1=1\right\}.
    \end{align*}
    We have that
    $|A^0| =  \sum_{r=1}^n  |A^0_{r}|$,
    where $A^0_{r}$ is the set of all sequences in $A^0$ that have exactly $r$ runs (of \emph{zeros} and \emph{ones}). Since the first run of any sequence in $A^0$ is a run of \emph{zeros}, any sequence in $A^0_{r}$ contains $r_1 = \lceil(r-1)/2\rceil$ runs of \emph{ones} and $r_0=r-r_1 =\lfloor(r+1)/2\rfloor $ runs of \emph{zeros}.
    Additionally, each run of \emph{zeros} must contain at least a single \emph{zero} and each run of \emph{ones} must contain at least $\ell$ \emph{ones}. Note that $|A_r^0|$ is equal to  the number of ways to select the lengths of these $r$ runs. As we have total of $n$ symbols, and we need to have at least a single \emph{zero} in each run of \emph{zeros} and at least $\ell$ \emph{ones} at each run of \emph{ones}, we have $n-\ell r_1-r_0$ more bits that can be assigned to the different runs without restrictions.  Let $\CC{m}{b}\triangleq \binom{m+b-1}{b}=\binom{m+b-1}{m-1}$ denote the number of $b$-element combinations of $m$ objects, with repetitions. We have that  
    \begin{align*}
        |A_{r}^0| & = \hspace{-2pt}\CC{r}{n-\ell r_1-r_0} \hspace{-2pt}= \hspace{-2pt}\CC{r}{n-\lceil(r-1)/2\rceil(\ell-1) -r } \\
        & = \hspace{-2pt}\binom{n-\lceil(r-1)/2\rceil(\ell-1)- 1}{r-1}
    \end{align*}
    and since the number of runs can be at least $1$ and at most $n$,
    \begin{align*}
        |A^0| =  \sum_{r=1}^n  \binom{n-\lceil(r-1)/2\rceil(\ell-1)- 1}{r-1},
    \end{align*}
    where $\binom{m}{b}=0$ if $m<b$ or $m<0$.

Similarly, we have that 
\begin{align*}
    |A^1| & = \sum_{r=1}^n \CC{r}{n-\lfloor(r+1)/2\rfloor(\ell-1)-r}\\
    & = \sum_{r=1}^n \binom{n-\lfloor(r+1)/2\rfloor(\ell-1)-1}{r-1}
\end{align*}
and thus 
\vspace{-1.5ex}
\begin{align*}
    |\cR_\ell(n)|= & \sum_{r=1}^n
    \binom{n-\lceil\frac{r-1}{2}\rceil(\ell-1)- 1}{r-1} \\ & +\sum_{r=1}^n \binom{n-\lfloor\frac{r+1}{2}\rfloor(\ell-1)-1}{r-1}.
\end{align*}
Assume $n$ is odd then 
\begin{align*}
    \sum_{r=1}^n&\binom{n-\lfloor\frac{r+1}{2}\rfloor(\ell-1)-1}{r-1} =  \sum_{t=1}^{\frac{n-1}{2}}\binom{n-t(\ell-1)-1}{2t-2}  \\
    &\ \ \ \ +\sum_{t=1}^{\frac{n-1}{2}} \binom{n-t(\ell-1)-1}{2t-1} + \underbrace{\binom{n-\frac{n+1}{2}(\ell-1)-1}{\frac{n+1}{2}-1}}_{=0}\\
    & = \sum_{t=1}^{\frac{n-1}{2}}\left(\binom{n-t(\ell-1)-1}{2t-2} + \binom{n-t(\ell-1)-1}{2t-1}\right)\\
    & = \sum_{t=1}^{\frac{n-1}{2}}\binom{n-t(\ell-1)}{2t-1},
\end{align*}
where the last equality follows from the identity $\binom{n-1}{k-1}+\binom{n-1}{k}=\binom{n}{k}$.
Similarly, 
\begin{align*}
    \sum_{r=1}^n &\binom{n-\lceil\frac{r-1}{2}\rceil(\ell-1)-1}{r-1} =  \\
    &\ \ \ \ 1 + \sum_{t=1}^{\frac{n-1}{2}} \left(\binom{n-t(\ell-1)-1}{2t-1} + \binom{n-t(\ell-1)-1}{2t}\right)\\
    & = 1 + \sum_{t=1}^{\frac{n-1}{2}}\binom{n-t(\ell-1)}{2t}.
\end{align*}
Hence, for odd $n$ we have that 
\begin{align*}
    |\cR_\ell(n)| & = 1 + \sum_{t=1}^{\frac{n-1}{2}}\left(\binom{n-t(\ell-1)}{2t-1}+ \binom{n-t(\ell-1)}{2t}\right)\\
    & = 1 + \sum_{t=1}^{\frac{n-1}{2}}\binom{n-t(\ell-1)+1}{2t}\\
    & = \sum_{t=0}^{\frac{n-1}{2}}\binom{n-t(\ell-1)+1}{2t}.
\end{align*}
Using the same technique, it can be shown that for even $n$
$$
|\cR_\ell(n)| = \sum_{t=0}^{\frac{n}{2}}\binom{n-t(\ell-1)+1}{2t}.
$$
Finally, by observing that $n-t(\ell-1)+1\ge 2t$ if and only if $\frac{n-1}{\ell+1}\ge t$, and hence $\binom{n-t(\ell-1)+1}{2t}=0$ for any $t>\frac{n-1}{\ell+1}$, we have that for any two integers $n$ and $\ell$ we have that 
\[
|\cR_\ell(n)| =  \sum_{t=0}^{\lfloor \frac{n+1}{\ell+1}\rfloor}\binom{n-t(\ell-1)+1}{2t}.
\vspace{-1.5ex}\]
    \hfill\quad $\blacksquare$
\end{document}